\tikzstyle{res}=[draw=black, fill=white, minimum size=1.3em, font={\tiny}]
\title{Pure Nash Equilibria in Resource Graph Games}
\newcommand{\C}{\mathcal{C}}
\newcommand{\R}{\mathbb{R}}
\newcommand{\N}{\mathbb{N}}
\newcommand{\Z}{\mathbb{Z}}
\newcommand{\B}{{\mathcal B}}
\renewcommand{\vec}[1]{\mathbf{#1}}
\newtheorem{theorem}{Theorem}
\newtheorem{definition}{Definition}
\newtheorem{lemma}{Lemma}
\newtheorem{example}{Example}
\author{Tobias Harks\footnote{Augsburg University, Universit\"atsstra\ss e 14, 86159 Augsburg, Germany.	\texttt{tobias.harks@math.uni-augsburg.de}} \and Max Klimm\footnote{Technische Universit\"at Berlin, Stra\ss e des 17.~Juni 136, 10623 Berlin, Germany. \texttt{klimm@tu-berlin.de}} \and Jannik Matuschke\footnote{KU Leuven, Naamsestraat 69, 3000 Leuven, Belgium. \texttt{jannik.matuschke@kuleuven.be}}}
\begin{document}
\maketitle

\begin{abstract}
This paper studies the existence of pure Nash equilibria in resource graph games, which are a general class of strategic games used to succinctly represent the players' private costs. There is a finite set of resources and the strategy set of each player corresponds to a set of subsets of resources. The cost of a resource is an arbitrary function that depends on the load vector of the resources in a specified neighborhood. As our main
result, we give complete characterizations of the cost functions guaranteeing the existence of pure Nash equilibria for weighted and unweighted players, respectively.
\begin{enumerate}
\item For unweighted players, pure Nash equilibria are guaranteed to exist for any choice of the players' strategy space if and only if the cost of each resource is an arbitrary function of the load of the resource itself and linear in the  load of all other resources where the linear coefficients of mutual influence of different resources are symmetric. This implies in particular that for any other cost structure there is a resource graph game that does not have a pure Nash equilibrium.
\item For weighted games where players have intrinsic weights and the cost of each resource depends on the aggregated weight of its users, pure Nash equilibria are guaranteed to exist if and only if the cost of a resource is linear in all resource loads, and the linear factors of mutual influence are symmetric, or there is no interaction among resources and the cost is an exponential function of the local resource load. 
\item  For the special case that the players' strategy sets are matroids, we show that pure Nash equilibria exist under a local monotonicity property, even when cost functions are player-specific. We point out an application of this result to bilevel load balancing games, which are motivated by the study of network infrastructures that are resilient against external attackers and internal congestion effects.
\item  Finally, we discuss the computational complexity of deciding whether a given strategy profile is a pure Nash equilibrium and derive hardness results for network routing games and matroid games, respectively.
\end{enumerate}
\end{abstract}

\newpage

\section{Introduction}
Multi-agent systems are characterized by the intricate interplay of the different and sometimes conflicting self-interests of a large number of independent individuals. In order to study the effects of selfish behavior on the overall state of these systems game-theoretic solution concepts are used, most notably the concept of a Nash equilibrium.
Important questions for the analysis of multi-agents systems are, thus, under which conditions Nash equilibria exists and how they can be computed.
For systems with a large number of players (as they frequently appear in multi-agent systems modelling economic, traffic, or telecommunication applications), the representation of the games becomes an important issue. For illustration, consider a system with $n$ agents, each with $m$ strategies. Encoding the payoffs of each agent in each of the $m^n$ strategy profiles requires the encoding of $nm^n$ rational number which is impractical even for modest sizes of $n$ and $m$.
Fortunately, for many multi-agents systems that arise from practical application, the agents' payoff have additional structure that allows for a succinct representation of the payoffs. Examples include extensive form games, congestion games (Rosenthal~\cite{Rosenthal73a}), graphical games (Kearns et al.~\cite{KearnsLS01}), action graph games (Jiang et al.~\cite{JiangLB11}), and local effect games (Leyton-Brown and Tenneholtz~\cite{Leyton-BrownT03}).

A general class of games that includes several of the specific classes of games above is the class of \emph{resource graph games} introduced by Jiang et al.~\cite{JiangCL17}. In a resource graph game, we are given a finite set $N = \{1,\dots,n\}$ of players and a finite set $R = \{1,\dots,m\}$ of resources. The strategy set available to player~$i$ is a set $X_i \subseteq \{0,1\}^m$ with a succinct representation.\footnote{Jiang et al.\ consider a polytopal representation, but the exact specifics how the set $X_i$ is represented is less important in this work, as we focus on existence of equilibria and less on computational aspects.}
Given a strategy profile $x = (\vec x_1,\dots, \vec x_n)$, let $\vec x = \sum_{i \in N} \vec x_i \in \R_{\geq 0}^m$ denote the configuration profile representing the total number of players using each resource in strategy profile $x$. Then, the private cost of player~$i$ is defined as 
\begin{align*}
\pi_i(x) =   \vec x_i^\top \vec c(\vec x) = \sum_{r \in R} x_{i,r}\, c_r(\vec x) \quad \text{ for all $i \in N$},
\end{align*}
where $\vec c : \R^m_{\geq 0} \to \R^m$ is an arbitrary function.
For most applications, the function $\vec c : \R^m_{\geq 0} \to \R^m$ itself has a succinct representation of the following form. For each resource~$r$, let $B_r \subseteq R$ be an arbitrary subset of neighbors of $r \in R$ and assume that the function $\vec c : \R_{\geq 0}^m \to \R^m$, $\vec x \mapsto (c_1(\vec x), \dots, c_m(\vec x))$ has the property that for every resource~$r \in R$ the cost $c_r$ depends only on the configuration profile of the resources in $B_r$, i.e., $c_r(\vec x) = c_r(\vec y)$ for all $\vec x,\vec y \in \R_{\geq 0}^m$ with $x_s = y_s$ for all $s \in B_r$.
A graphical illustration of such a game is obtained by the graph that has the vertex set $R$ and a directed edge from $s$ to $r$ if and only if $s \in B_r$. This is the intuition behind the name \emph{resource graph games}, see Fig.~\ref{fig:resource-graph-games} for an illustration.

\begin{figure}
\centering
 \begin{tikzpicture}[scale=1.8]
      
    \begin{scope}
      \draw[rounded corners, fill=blue!20, opacity=0.5] (0.7, -0.3) -- ++(0, 1.6) -- ++(1.6, 0) -- 
        ++(0, -0.6) -- ++(-1, 0) -- ++(0, -1) -- cycle;
    \end{scope}
      
    \begin{scope}[xscale=-1, xshift=-5cm]
      \draw[rounded corners, fill=cyan!20, opacity=0.5] (0.7, -0.3) -- ++(0, 1.6) -- ++(1.6, 0) -- 
        ++(0, -0.6) -- ++(-1, 0) -- ++(0, -1) -- cycle;
    \end{scope}
      
    \begin{scope}
      \draw[rounded corners, fill=red!20, opacity=0.5] (1.8, -0.3) -- ++(0, 1.9) -- ++(2.4, 0) -- 
        ++(0, -0.9) -- ++(-0.4, 0) -- ++(0, 0.7) -- ++(-1.6, 0) -- ++(0, -1.7) -- cycle;
    \end{scope}
      
    \begin{scope}[xscale=-1, xshift=-5cm]
      \draw[rounded corners, fill=orange!20, opacity=0.5] (1.8, -0.3) -- ++(0, 2.2) -- ++(2.4, 0) -- 
        ++(0, -1.2) -- ++(-0.4, 0) -- ++(0, 1) -- ++(-1.6, 0) -- ++(0, -2) -- cycle;
    \end{scope}
    
    \node at (1.5, 1) {$\vec x_{1}$};
    \node at (3.5, 1) {$\vec x'_{1}$};
    
    \node at (4, 1.45) {$\vec x_{2}$};
    \node at (1, 1.5) {$\vec x'_{2}$};
    
    \foreach \x in {1,2,3,4}
    {
        \node[res] (r\x) at (\x,1) {$r_{\x}$};
        
        \node[res] (s1) at (1,0) {$r_{5}$};
        \node[res] (s2) at (2,0) {$r_{6}$};
        \node[res] (s3) at (3,0) {$r_{7}$};
        \node[res] (s4) at (4,0) {$r_{8}$};

        \draw[black!40, <->, thick] (r\x) -- (s\x);

    }
    \draw[black!40, ->, thick] (r1) -- (s2);
     \draw[black!40, <->, thick] (r3) -- (s4);
      \draw[black!40, ->, thick] (s3) -- (r4);
       \draw[black!40, ->, thick] (r3) -- (s1);
       
       \draw[black!40, ->, thick] (s3) -- (s4);
       \draw[black!40, ->, thick] (r3) edge[bend right=45] (r1);
    
  \end{tikzpicture}
  \caption{An example of a resource graph game with resource set $R=\{r_1,\dots, r_8\}$ visualizing the non-separable
  effects among resources w.r.t the cost function.
  The white rectangles represent the $8$ resources, the (bi)directed edges represent the 
  non-separability of costs, that is, a directed edge from node $r_j$ to node $r_i$
  indicates that the function value $ c_{r_i}(\vec x)$ depends on the entry $x_{r_j}$, or equivalently, $r_j\in B_{r_i}$. The colored subsets of resources represent the strategies  $\vec x_1, \vec y_1$ and $\vec x_2, \vec y_2$ of the two players $\{1,2\}$.
  }\label{fig:resource-graph-games}
  \end{figure}
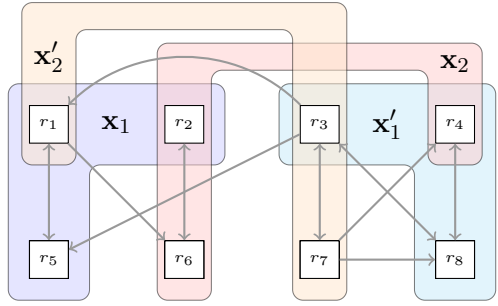

Below, we illustrate classes of games that are special cases of resource graph games. We start with the class of unweighted congestion games introduced by Rosenthal~\cite{Rosenthal73a} as a model for road traffic and for production with demand-dependent costs.

\begin{example}[Unweighted congestion games]
When the neighborhood of each resource~$r$ contains only $r$, i.e., $B_r = \{r\}$ for all $r \in R$, the cost of each resource depends only on the number of players using it. We then obtain the class of (unweighted) congestion games as a special case of resource graph games. 
\end{example}

Another example of a subclass of resource graph games is the class of local effect games as introduced by Leyton-Brown and Tennenholtz~\cite{Leyton-BrownT03}. Compared to unweighted congestion games, they are less general in terms of the players' strategies since only singleton strategies are allowed; in terms of the cost structure on the resources, they are more general since the cost of a resource may also depend on the load of other resources. We call a resource graph game a \emph{singleton game}, if $\sum_{r \in R} x_{i,r} = 1$ for all $\vec x_i = (x_{i,r})_{r \in R} \in X_i$ and all $i \in N$.

\begin{example}[Local effect games]
Local effect games are singleton resource graph games, where for every resource~$r$, there is a function
$f_r : \R_{\geq 0} \to \R$ and for every pair of resources $r,s \in R$ such that $s \in B_r$, there is a function $f_{r,s} : \R_{\geq 0} \to \R$ such that $c_r(\vec x) = f_r(x_r) + \sum_{s \in B_r} f_{s,r}(x_s)$ .
\end{example}

The following class of action graph games is a generalization of local effect games. They are introduced by Jiang et al.~\cite{JiangLB11} and generalize local effect games as they allow an arbitrary functional dependence of the cost of a resource on the load of all other resources, yet, they are a sublcass of resource graph games. Jiang et al.~discuss applications in modelling location games, congestion games, and anonymous games, but the class of games is universal as any strategic game can be represented as an action graph game. Thompson and Leyton-Brown~\cite{ThompsonL17} show how to use action graph games to compute equilibria in position auctions.

\begin{example}[Action graph games]
The class of action graph games is equivalent to the class of singleton resource graph games.
\end{example}

We now introduce a new class of games related to security games with congestion effects.
Consider a load balancing setting where players choose one resource out of a set of resources. After observing the realized loads, a follower attacks the resources with maximum loads causing additional disutilities for the players choosing the attacked resources. Attacks may be thought of as either being actual attacks by a malicious player or as controls by a central authority to counter tax or fare evasion (see Correa et al.~\cite{CHKM17} for a related mathematical model of fare evasion without any congestion or load balancing effects). In both applications it is sensible to assume that the attacker has a budget~$B$ that is spent evenly among the resources with maximal load and that the leaders anticipate this strategy.
This motivates the definition of the following class of \emph{bilevel load balancing games} that, to the best of our knowledge, is new in the literature.

\begin{example}[Bilevel load balancing games]\label{ex:bilevel-cg}
Bilevel load balancing games are singleton resource graph games, where for every resource~$r$, the cost is of the form
\begin{equation}\label{eq:cost_attack}
c_r(\vec x):= x_r+\kappa_r^*(\vec x),
\text{ where } 
\kappa_r^*(\vec x)=\begin{cases} \frac{B}{|\arg\max_{r \in R} \{x_r \} |}, & \text{ if } r \in \arg\max_{r \in R} \{x_r \}, \\ 0, &\text{ else.}\end{cases}.\end{equation}
\end{example}

Finally, we mention that the interdependence of costs of resources on the loads of other resources has a long history in non-atomic traffic models. Dafermos~\cite{Dafermos71} proposes the use of such models to model the dependencies of the travel times on opposing directions of a two-lane road and on road segments leading to a common crossing. We obtain the following class of congestion games with non-separable costs as the natural atomic counterpart of the non-atomic traffic models with non-separable costs considered in the traffic literature (Dafermos \cite{Dafermos71,Dafermos72}, Smith~\cite{Smith79}).

\begin{example}[Unweighted network congestion games with non-separable costs]
These games are resource graph games, where the set of resources $R$ corresponds to the set of edges of a road network. For every player~$i$, the strategy set $X_i$ corresponds to the (indicator vectors of the edge set) of all paths between a source node $o_i$ and a destination node $d_i$ in the road network. One typically assumes that $\vec c : \R_{\geq 0}^R \to \R^m$ is monotonically non-decreasing, i.e., $c_r(\vec x) \leq c_r(\vec y)$ for all $x, y \in X$ with $x_s \leq y_s$ for all $s \in R$. 
\end{example}

In the examples discussed above, a mixed Nash equilibrium is guaranteed to exist due to Nash's theorem \cite{Nash50a}. However, as discussed, e.g., in Jiang and Leyton-Brown~\cite{JiangL-B07}, pure Nash equilibria are more favorable as a solution concept as they are easier to implement in practice. We are interested in identifying maximal conditions on the cost functions that ensure the existence of pure Nash equilibria in resource graph games. 

\subsection{Our results}

In this paper, we study the existence of pure Nash equilibria for resource graph games
with respect to the non-separable cost structures. We call a non-empty set $\C$ of cost functions \emph{consistent}, if every resource graph game with cost functions from $\C$ admits pure Nash equilibria.
We only require a natural condition on $\C$, namely that $\C$ is \emph{closed under composition}.
This means that for any two functions $\vec c_1,\vec c_2\in \C$ ($\vec c_1=\vec c_2$ is allowed)
acting on resource sets $R_1$ and $R_2$ with $|R_1|=m_1$ and $|R_2|=m_2$, respectively, the cost function $\vec c_1 \oplus \vec c_2 : \R^{m_1+m_2}_{\geq 0} \to \R^{m_1+m_2}$ defined as $\vec c_1 \oplus \vec c_2=(\vec c_1,\vec c_2)$ also belongs to $\C$.
This property naturally arises by composing two disjoint subsets $R_1$ and $R_2$ so that the cost structure within each set of the disjoint union is given by $\vec c_1$ and $\vec c_2$, and there is no interaction between the loads and costs of resources contained in the two different sets. We obtain the following results.
\begin{enumerate}
\item As our main result we show in Theorem~\ref{thm:consistency_unweighted} that a composition-closed set  $\C$ of cost functions 
is consistent if and only if for each $\vec c \in \C$ with $\vec c : \Z_{\geq 0}^m \to \R^m$ for some $m \in \N$, there are arbitrary functions $f_1,\dots,f_m : \Z_{\geq 0} \to \R$ and a symmetric matrix $\vec A \in \R^{m \times m}$ such that 
\begin{align*}
  \vec c(\vec x) = \bigl(f_1(x_1), \dots, f_m(x_m)\bigl)^\top + \vec A \vec x.
\end{align*}
Our result implies in particular that every resource graph game with this cost structure has a pure Nash equilibrium. This generalizes a result of Leyton-Brown and Tennenholtz~\cite{Leyton-BrownT03} who show that for the special case of local effect games with this cost structure, a pure Nash equilibrium exists. Our characterization also implies that for every other cost function $\tilde{\vec c}$ that does not adhere to this form, there is an unweighted resource graph game with costs defined by $\tilde{\vec c}$ that does not have a pure Nash equilibrium. For the proof of this result, we construct several highly symmetric resource graph games that allow to derive functional equations on the set of consistent cost functions that combined leave cost functions of the form above as the only possibility. Our results are also relevant for related work on the complexity of deciding the existence of pure Nash equilibria in action graph games (Jiang and Leyton-Brown~\cite{JiangL-B07}) as it implies that this computational problem is trivial for cost functions of the required form above.

\item We then study \emph{weighted} resource graph games, a natural generalization of resource graph games, where every player~$i$ has an intrinsic weight~$w_i$ and their strategy set is $X_i = \{w_i\, \vec x_i : \vec x_i \in Y_i\}$, where $Y_i \subseteq \{0,1\}^m$ is arbitrary. These games are relevant as a more fine-grained model for congestion games with non-separable costs, where the players have a different impact on the costs of the resources. We also provide a full characterization of the cost functions that are consistent for weighted resource graph games. Specifically, we show in Theorem~\ref{thm:consistency_weighted} that a composition-closed set $\C$ of continuous cost functions 
is consistent if and only if for each $\vec c \in \C$ with $\vec c : \R_{\geq 0}^m \to \R^m$ for some $m \in \N$, 
either $\vec c$ consists of separable exponential functions with a common exponent $\phi\in \R$, or 
there is a symmetric matrix $\vec A \in \R^{m \times m}$ and vector $\vec b\in \R^m$ such that 
$  \vec c(\vec x) = \vec A \vec x+\vec b.$
\item If the players' strategy spaces are restricted by some combinatorial property, then other non-separable
cost functions are possibly consistent.
In this regard, we consider matroid bases as such combinatorial domain. We show in Theorem~\ref{polymatroid:main} that pure Nash equilibria exist under a local monotonicity property, even when cost functions are player-specific. We demonstrate the applicability of this result by deriving an existence
 result of pure Nash equilibria for bilevel load balancing games as introduced in Example~\ref{ex:bilevel-cg}.  This classs of games is motivated by the study of network infrastructures facing external attackers and internal congestion effects.
\item Finally, in Section~\ref{app:complexity}, we discuss the computational complexity of deciding whether a given strategy profile is a pure Nash equilibrium and derive hardness results for network routing games and matroid games, respectively.
\end{enumerate}
\subsection{Related work}

Rosenthal~\cite{Rosenthal73a} shows that every unweighted congestion game with separable costs has a pure Nash equilibrium. Milchtaich~\cite{Milchtaich96} proposes two generalizations of unweighted congestion games. In the first generalization, called weighted congestion games, each player has a weight and the cost of each resource depends on the aggregated weight of its users. In the second generalization, called congestion games with player-specific costs, every player has an individual cost function for each resource.
Both generalizations alone still admit a pure Nash equilibrium for singleton congestion games, but the combination of both games fails to provide pure Nash equilibria, even for singleton games. The positive result for singletons is generalized by Ackermann et al.~\cite{Ackermann09} to games, where the strategy set of each player corresponds to the set of basis of a matroid. Weighted congestion games with general strategy spaces may fail to have a pure Nash equilibrium (Goemans et al.~\cite{Goemans05}, Libman and Orda~\cite{Libman01}), but have a pure Nash equilibrium for affine costs or exponential costs (Harks and Klimm~\cite{HarksK12}, Harks et al.~\cite{HarksKM11}, Fotakis et al.~\cite{Fotakis05}, Panagopoulou and Spirakis~\cite{Panagopoulou06}). Local effect games are introduced by Leyton-Brown and Tennenholtz~\cite{Leyton-BrownT03} who show that a pure Nash equilibrium exists when the mutual influence of different resources on the cost is linear and symmetric. Dunkel and Schulz~\cite{Dunkel08} show that for these games the computation of a pure Nash equilibrium is $\mathsf{PLS}$-complete. They also show that for  both local effect games with non-linear mutual effects and weighted congestion games with arbitrary cost functions, it is $\mathsf{NP}$-hard to decide whether a pure Nash equilibrium exists. The $\mathsf{PLS}$-completeness of computing a pure Nash equilibrium in unweighted congestion games with affine costs due to Ackermann et al.~\cite{Ackermann08} carries over to the weighted case. 

Action graph games are introduced by Bhat and Leyton-Brown~\cite{BhatL04} and Jiang et al.~\cite{JiangLB11} as a generalization of local-effect games. They show that every strategic game can be represented as an action graph game. Daskalakis et al.~\cite{DaskalakisSVV09} give a fully polynomial-time approximation scheme (FPTAS) for computing an approximate mixed equilibrium in action graph games with constant degree, constant treewidth, and a constant number of agent types. They also give several hardness results for the case that one of the conditions on the game is violated. Jiang and Leyton-Brown~\cite{JiangL-B07} show that for symmetric action graph games played on a graph of bounded treewidth, it can be decided efficiently whether a pure Nash equilibrium exists while the problem is $\mathsf{NP}$-hard to decide in general.

Resource graph games are introduced as by Jiang et al.~\cite{JiangCL17} as a further generalization of action graph games. Chan and Jiang~\cite{ChanJ18} give an FPTAS for computing an approximate Nash equilibrium in resource graph games with a constant number of player types and further restrictions on the strategy sets.

Congestion games with non-atomic players where the load of one resource has an impact on the cost of another resource are usually called \emph{congestion games with non-separable costs}. They were first proposed by Dafermos~\cite{Dafermos71,Dafermos72}. She shows that the equilibrium condition can be formulated as an optimization problem, if the Jacobian of the cost function is symmetric. Smith~\cite{Smith79} provides a variational inequality for the non-symmetric case. Perakis~\cite{Perakis07} studies the price of anarchy of non-atomic congestion games with linear non-separable costs of the form $c(\vec x) = \vec A \vec x + \vec b$.

  Bilevel, Stackelberg, or Leader-Follower games, have been studied extensively over the last years. In these games, the players are partitioned into \emph{leaders}, acting first, and \emph{followers}, choosing their strategy only after the leaders' choices become apparent. Such hierarchical relationships appear in many real-world problems, e.g. in pricing or toll setting problems~\cite{LabbeV16,CorreaGLNS18, HarksSV19}, 
security games~\cite{SinhaFAKT18,JiangPQST13}, fare evasion games~\cite{CHKM17},
  supply chain and marketing management~\cite{Xiuli2007}, or in voting scenarios~\cite{Xia2010}.  
In the context of bilevel games with congestion effects,
Castiglioni et al.~\cite{CastiglioniMGC19} and Marchesi et al.~\cite{MarchesiC019} considered Stackelberg games with an underlying unweighted congestion game. However, they assume that there is only one leader and the leader participates in the same congestion game as the followers (but the leader's congestion cost functions may be different from the followers'). Depending on the structure of strategy spaces and congestion cost functions, they analyze the computational complexity of computing Stackelberg equilibria. In particular, they devise efficient algorithms for singleton strategy spaces, where either all followers have the same strategies~\cite{CastiglioniMGC19}, or the followers can be divided in ``classes'' having the same strategies~\cite{MarchesiC019}. 
The case of multiple leaders playing an unweighted or weighted congestion game
subject to followers affecting the resource costs (as for instance by attacks as modeled in Example~\ref{ex:bilevel-cg}) is, to the best of our knowledge, completely open.

\section{Preliminaries}
For an integer $k \in \N$, let $[k] := \{1,\dots,k\}$. Let $N = [n]$ be a finite set of players and $R = [m]$ be a finite set of $m$ resources. For each player~$i$, the set of strategies available to player~$i$ is an arbitrary set $X_i \subseteq \{0,1\}^m$.
We call $x = (\vec x_1,\dots, \vec x_n)$ with $x_i \in X_i$ for all $i \in N$ a strategy profile and $X = X_1 \times \dots \times X_n$ the strategy space. We use standard game theory notation; for a strategy profile $x \in X$, we write $x = (\vec x_i, x_{-i})$ meaning that $\vec x_i$ is the strategy that player~$i$ plays in $x$ and $x_{-i}$ is the partial strategy profile of all players except $i$. Every strategy profile $x = (\vec x_1,\dots,\vec x_n) \in X$ induces a load vector $\vec x = \sum_{i \in [n]} \vec x_i \in \R_{\geq 0}^m$.
For a set $S \subseteq R = [m]$, we denote by $\mathbf{1}_{S}$ the indicator vector of set $S$ in $\R^m$.
We are further given a cost function $\vec c : \R^m_{\geq 0} \to \R^m$. Usually, one assumes that $\vec c$ has a succinct representation of the following form. For every resource $r \in R$, there is a neighborhood $B_r \subseteq R$ such that $c_r(\vec x)$ is independent of $x_s$ for all $s \notin B_r$, i.e., $c_r(\vec x) = c_r(\vec y)$ for all $\vec x, \vec y \in X$ with $x_s = y_s$ for all $s \in B_r$. If this is the case, and $|B_r| \leq k$ for all $r \in R$ the function $\vec c$ can be encoded by $mn^k$ numbers since it suffices to specify for each $r \in R$ the value of $c_r(\vec x)$ as a function of the $n^k$ possible load vectors of the resources in $B_r$. Intuitively, the function $\vec c$ maps a load vector $\vec x \in \R^m_{\geq 0}$ to a cost vector $\vec c(\vec x) \in \R^m$, i.e., $\vec c(\vec x) = (c_1(\vec x),\dots,c_m(\vec x))^\top$ and for a resource $r \in R$ the cost experienced by players using $e$ when the congestion vector is $\vec x$ is $c_r(\vec x)$.
The strategic game $G = (N, X, (\pi_i)_{i \in N})$ where the private cost of player~$i$ in strategy profile $x \in X$ is defined as $\pi_i(x) = \vec x_i^\top c(\vec x) = \sum_{r \in R} x_{i,r} c_r(\vec x)$ is called a \emph{resource graph game}. 

We also consider a generalization of resource graph game to weighted players. In a \emph{weighted resource graph game}, every player~$i \in N$ has a weight $w_i \in \R_{>0}$. The strategy set of player~$i$ is then defined as $X_i = \{w_i\, \vec x_i : \vec x_i \in Y_i\}$ where $Y_i \subseteq \{0,1\}^m$ is arbitrary. Compared to unweighted resource graph games, in a weighted resource graph game, the set of vectors in the strategy of player~$i$ is multiplied with the scalar $w_i$. The weighted resource graph game is then the strategic game $G = (N, X, (\pi_i)_{i \in N})$ where $\pi_i$ is defined as before. 

A strategy profile $x \in X$ is a \emph{pure Nash equilibrium}, if $\pi_i(x) \leq \pi_i(\vec y_i, x_{-i})$ for all $i \in N$ and $\vec y_i \in X_i$.
For a non-empty set $\C$ set of cost functions,
we are interested in establishing conditions on $\C$ that ensure that every
resource graph game with cost functions from $\C$ admits pure Nash equilibria.
We require a mild technical assumption on $\C$, namely that $\C$ is \emph{closed under composition} in the following sense. 
Let $\vec c, \vec c' \in \C$ with $\vec c : \R_{\geq 0}^m \to \R^m$ and $\vec c' : \R_{\geq 0}^{m'} \to \R^{m'}$. Then, we require that the function $\vec c \oplus \vec c' : \R^{m+m'}_{\geq 0} \to \R^{m+m'}_{\geq 0}$ defined as
$
\vec c \oplus \vec c'
(
\vec x,
\vec y) =
(
\vec c(\vec x),
\vec c'(\vec y)
)
$
is also contained in $\C$. This is an intuitive property of a set of functions $\C$ for the following reasons. The cost functions $\vec c$, $\vec c'$ each define a cost structure on sets of resources $R$, $R'$ with $|R|=m$ and $|R'| = m'$. The cost function $\vec c \oplus \vec c' : \R^{m+m'}_{\geq 0} \to \R^{m+m'}$ then defines a cost structure on the disjoint union of $R$ and $R'$,  where the cost structure within each set of the disjoint union is given by $\vec c$ or $\vec c'$, and there is no interaction between the loads and costs of resources contained in the two different sets. In particular, for any $k \in \N$ and any $\vec c \in \C$, $\vec c : \R_{\geq 0}^m \to \R^m$, the $k$-fold disjoint union $\vec c \oplus \dots \oplus \vec c : \R^{km}_{\geq 0} \to \R^{km}$ is contained in $\C$. In the following, we denote the $k$-fold disjoint union of $\vec c$ by $\vec c^k$.
For a set $\C$ of cost functions as above, we say that $\C$ is \emph{consistent for unweighted resource graph games}, if for every $\vec c \in \C$, we have that every unweighted resource graph game with costs given by $\vec c$ has a pure Nash equilibrium. Recall that when $\vec c : \R_{\geq 0}^m \to \R^m$ with $m \in \N$, then every unweighted resource graph game with costs given by $\vec c$ has $m$ resources. Consistency for weighted resource graph games is defined analogously.

\section{Resource graph games with unweighted players}
\label{sec:unweighted}

In this section, we consider unweighted resource graph games. Since in such a game,
the load on each resource is a nonnegative integer, it is without loss of generality to assume that 
the domain of all cost functions is the non-negative integer lattice, that is, they are of the form $\vec c : \Z^m_{\geq 0} \to \R^m$ for some $m \in \N$.
Our main result gives a complete characterization of consistency for unweighted resource graph games.

\begin{theorem}
\label{thm:consistency_unweighted}
Let $\C$ be a set cost functions that is closed under composition. Then the following two statements are equivalent:
\begin{enumerate}
    \item\label{item:main1} $\C$ is consistent for unweighted resource graph games.
    \item\label{item:main2} For each $\vec c \in \C$ with $\vec c : \Z_{\geq 0}^m \to \R^m$ for some $m \in \N$, there are functions $f_1,\dots,f_m : \Z_{\geq 0} \to \R$ and a symmetric matrix $\vec A \in \R^{m \times m}$ such that 
\begin{align}
  \vec c(\vec x) = \bigl(f_1(x_1), \dots, f_m(x_m)\bigl)^\top + \vec A \vec x.\label{eq:cost-function-unweighted}
\end{align}
\end{enumerate}
In particular, the set $\C^*$ of all cost functions of the form \eqref{eq:cost-function-unweighted} is the unique maximal set of cost functions that is closed under composition and consistent for unweighted resource graph games.
\end{theorem}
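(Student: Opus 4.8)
The plan is to prove the equivalence of \ref{item:main1} and \ref{item:main2} in both directions and then read off the maximality statement as an easy consequence. I begin with the direction \ref{item:main2}$\Rightarrow$\ref{item:main1}, which is the constructive one: I claim that every unweighted resource graph game whose cost function has the form \eqref{eq:cost-function-unweighted} is an exact potential game. Concretely, I would exhibit the potential
\[
\Phi(x)=\sum_{r\in R}\sum_{k=1}^{x_r} f_r(k)+\tfrac12\,\vec x^\top \vec A\vec x+\tfrac12\sum_{i\in N}\vec x_i^\top \vec A\vec x_i,
\]
and verify by a direct computation that, when a single player~$i$ switches from $\vec x_i$ to $\vec y_i$, the change in $\Phi$ equals the change in $\pi_i$. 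The Rosenthal-type first sum accounts for the separable parts $f_r$, while the two quadratic terms together produce exactly the cross-effect $(\vec y_i-\vec x_i)^\top\vec A\,x_{-i}+(\vec y_i^\top\vec A\vec y_i-\vec x_i^\top\vec A\vec x_i)$; here symmetry of $\vec A$ is precisely what makes the off-diagonal contributions match. Since the strategy space $X\subseteq\prod_i\{0,1\}^m$ is finite, $\Phi$ attains a minimum, and any minimizer is a pure Nash equilibrium.

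The substantial direction is \ref{item:main1}$\Rightarrow$\ref{item:main2}, which I would attack by contraposition: assuming some $\vec c\in\C$ is \emph{not} of the form \eqref{eq:cost-function-unweighted}, I construct a resource graph game with costs drawn from $\C$ (using $\vec c$ and, via closure under composition, its disjoint copies $\vec c^k$) that has no pure Nash equilibrium. Writing $\Delta_s c_r(\vec z):=c_r(\vec z+\vec e_s)-c_r(\vec z)$ for the discrete derivative in coordinate $s$, the target form is equivalent to two finite-difference conditions: first, for every $r$ and every $s\neq r$ the quantity $\Delta_s c_r(\vec z)$ is a constant $a_{rs}$ independent of $\vec z$ (this simultaneously encodes linearity in $x_s$, absence of higher-order cross terms, and independence of the coefficient from $x_r$ and the remaining loads), and second, the symmetry $a_{rs}=a_{sr}$. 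Indeed, once constancy holds, telescoping unit increments from $x_r\vec e_r$ to $\vec x$ yields $c_r(\vec x)=c_r(x_r\vec e_r)+\sum_{s\neq r}a_{rs}x_s$, so that $f_r(t):=c_r(t\vec e_r)$ and the symmetric matrix $\vec A$ with off-diagonal entries $a_{rs}$ (and, say, zero diagonal, the diagonal linear part being absorbed into $f_r$) realize \eqref{eq:cost-function-unweighted}.

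The core of the argument is therefore a family of highly symmetric games, each designed to isolate one facet of these finite-difference conditions. The skeleton is a two-player game in which one player toggles the load on $r$ and the other toggles the load on $s$, with a background load $\vec z$ fixed by auxiliary players; the net cost change around the resulting four-cycle of deviations is a mixed second difference of $\vec c$, and by the Monderer--Shapley logic a nonzero value is what must be turned into a genuine best-response cycle. The hard part will be exactly this conversion: a nonzero mixed difference does not by itself preclude equilibrium in a bare $2\times2$ game, so I would use composition to attach disjoint copies of $\vec c$ and carefully chosen singleton-type strategy sets in order to (i) realize an \emph{arbitrary} background configuration $\vec z$, thereby upgrading any local violation to constancy for all $\vec z$, and (ii) cancel the deviating players' own-cost and all untargeted side effects, so that the sign pattern of profitable deviations genuinely cycles. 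Separate but analogous symmetric constructions pin down linearity in $x_s$, the vanishing of cross terms among the remaining coordinates, and finally the symmetry $a_{rs}=a_{sr}$; I expect the bookkeeping that guarantees all extraneous terms cancel to be the most delicate step.

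Finally, the maximality and uniqueness assertion follows formally once the equivalence is established. The set $\C^*$ of all cost functions of the form \eqref{eq:cost-function-unweighted} is closed under composition, since the disjoint union of two such functions again has a separable part obtained by concatenation and a block-diagonal, hence symmetric, matrix; by the direction \ref{item:main2}$\Rightarrow$\ref{item:main1} applied to $\C^*$ itself it is consistent. Conversely, if $\C$ is any composition-closed consistent set, then \ref{item:main1}$\Rightarrow$\ref{item:main2} forces every member of $\C$ into the form \eqref{eq:cost-function-unweighted}, so $\C\subseteq\C^*$. Thus $\C^*$ is consistent, composition-closed, and contains every such set, making it the unique maximal one.
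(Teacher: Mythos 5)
Your sufficiency direction and the concluding maximality argument are correct and coincide with the paper's: the potential $\Phi$ you write down is exactly the paper's function $P$ in Lemma~\ref{lem:consistency_unweighted_sufficient} (there obtained by summing marginal costs as players are added one by one), and the symmetry of $\vec A$ enters in the same place. The reduction of the target form to finite-difference conditions followed by telescoping also matches the paper's Lemma~\ref{lem:necessary_unweighted}, up to the caveat that constancy of $\Delta_s c_r(\vec z)$ can only be extracted for background loads with $z_r>0$, since $c_r$ is only observable through a player occupying $r$.

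The necessity direction, however, has a genuine gap: the step that converts ``$\C$ is consistent'' into those functional equations is left as a plan rather than executed. You correctly observe that a nonzero mixed second difference in a bare $2\times 2$ game does not preclude a pure Nash equilibrium, and propose to repair this by attaching disjoint copies so that ``the sign pattern of profitable deviations genuinely cycles''---but a best-response cycle is still not enough, since a game can contain such a cycle and nevertheless possess an equilibrium elsewhere. The paper's resolution is the crux of the proof and is different in kind: it builds, on four disjoint copies of the resource set, games that are $(A,B)$-symmetric for two designated players, meaning that in \emph{every} profile these players receive the payoffs $A$ and $B$ in some order and each can deviate to obtain the other's current payoff; any pure Nash equilibrium then forces $A=B$ (Lemma~\ref{lem:symmetric-game}), and the specific strategy sets---e.g.\ $X_1=\{\mathbf{1}_{\{r_1,s_2\}},\mathbf{1}_{\{s_3,r_4\}}\}$ and $X_2=\{\mathbf{1}_{\{s_1,r_3\}},\mathbf{1}_{\{r_2,s_4\}}\}$ for the Jacobian symmetry, with analogues in Lemmas~\ref{lem:game2} and~\ref{lem:game3} for the Hessian conditions---are chosen precisely so that $A-B$ equals the desired finite difference. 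Without exhibiting these constructions, which you yourself flag as the delicate bookkeeping you have not done, the implication \ref{item:main1}$\Rightarrow$\ref{item:main2} is not established.
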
 
We subdivided the proof of both directions in the following subsections.
\subsection{Proof of Theorem~\ref{thm:consistency_unweighted}: \ref{item:main2}. $\Rightarrow$ \ref{item:main1}.}

We first prove that statement $\ref{item:main2}.$ of Theorem~\ref{thm:consistency_unweighted} implies consistency of $\C$. Observe that any composition of functions of form \eqref{eq:cost-function-unweighted} is again of form \eqref{eq:cost-function-unweighted}. It is thus sufficient to show existence of a pure Nash equilibrium for any any unweighted resource graph game with a cost function of this form. 

\begin{restatable}{lemma}{restateLemUnweightedSufficient}
\label{lem:consistency_unweighted_sufficient}
Let $G$ be an unweighted resource graph game on $m$ resources with cost function $\vec c : \R_{\geq 0}^m \to \R^m$ given by $\vec c(\vec x) = \bigl(f_1(x_1), \dots, f_m(x_m)\bigl)^\top + \vec A \vec x$, where $f_1,\dots,f_m : \Z_{\geq 0} \to \R$ are arbitrary functions and $\vec A \in \R^{m \times m}$ is a symmetric matrix. Then $G$ has a pure Nash equilibrium.
\end{restatable}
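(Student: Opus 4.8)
The plan is to prove existence of a pure Nash equilibrium by exhibiting an exact potential function $\Phi : X \to \R$ with the property that for every player $i$, every strategy profile $x$, and every alternative strategy $\vec y_i \in X_i$,
\begin{align*}
\pi_i(\vec y_i, x_{-i}) - \pi_i(\vec x_i, x_{-i}) = \Phi(\vec y_i, x_{-i}) - \Phi(\vec x_i, x_{-i}).
\end{align*}
Once such a $\Phi$ exists, any global minimizer of $\Phi$ over the finite strategy space $X$ is a pure Nash equilibrium, since no unilateral deviation can strictly decrease a player's cost without strictly decreasing $\Phi$. The guess for $\Phi$ is motivated by Rosenthal's potential for the separable part $f_r$ and by the quadratic form associated with the symmetric matrix $\vec A$ for the interaction part. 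Concretely, I would try
\begin{align*}
\Phi(x) = \sum_{r \in R} \sum_{k=1}^{x_r} f_r(k) \;+\; \frac{1}{2}\,\vec x^\top \vec A \vec x \;+\; \frac12 \sum_{r \in R} A_{rr}\, x_r,
\end{align*}
or a close variant thereof, where $\vec x = \sum_{i} \vec x_i$ is the load vector induced by $x$.

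**Verifying the potential property.**

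The main technical step is to compute $\Phi(\vec y_i, x_{-i}) - \Phi(\vec x_i, x_{-i})$ and match it to the change in private cost. Writing $\vec x = \vec x_i + \vec x_{-i}$ for the load before the deviation and $\vec x' = \vec y_i + \vec x_{-i}$ after, where $\vec x_{-i} = \sum_{j \neq i} \vec x_j$, the separable part contributes a telescoping sum that reproduces exactly the Rosenthal-type difference for the $f_r$ terms. For the quadratic part, the key identity is
\begin{align*}
\tfrac12\, \vec x'^\top \vec A \vec x' - \tfrac12\, \vec x^\top \vec A \vec x = \tfrac12 (\vec x' - \vec x)^\top \vec A (\vec x' + \vec x),
\end{align*}
and here the symmetry of $\vec A$ is precisely what lets me write the cross terms as $(\vec y_i - \vec x_i)^\top \vec A \vec x_{-i}$ plus diagonal-correction terms, so that the total matches $\pi_i(\vec y_i, x_{-i}) - \pi_i(\vec x_i, x_{-i}) = \vec y_i^\top \vec c(\vec x') - \vec x_i^\top \vec c(\vec x)$. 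The diagonal correction $\frac12 \sum_r A_{rr} x_r$ is included to absorb the discrepancy coming from the fact that a player's own strategy is a $0/1$ vector, so $x_{i,r}^2 = x_{i,r}$; I would fix the exact coefficient by carrying out this one-player computation carefully.

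**The main obstacle.**

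The hard part is the bookkeeping of the quadratic term, and in particular isolating exactly where symmetry of $\vec A$ is used: without symmetry, the cross terms $\vec y_i^\top \vec A \vec x_{-i}$ and $\vec x_{-i}^\top \vec A \vec y_i$ do not coincide, and the candidate $\Phi$ fails to be an exact potential — which is consistent with the necessity direction of the theorem. I would therefore present the verification as a direct algebraic identity, being explicit that the step interchanging $\vec u^\top \vec A \vec v$ and $\vec v^\top \vec A \vec u$ invokes $\vec A = \vec A^\top$. A minor additional subtlety is that the private cost $\pi_i$ evaluates $\vec c$ at the \emph{post-deviation} load $\vec x'$ (the deviating player experiences the cost including their own contribution), so I must make sure the quadratic and diagonal terms are calibrated to match this convention rather than the pre-deviation load. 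After pinning down the constant on the diagonal correction, the verification reduces to elementary algebra, and minimizing the finite function $\Phi$ then yields the desired equilibrium.
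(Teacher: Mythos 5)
Your overall strategy is exactly the paper's: an exact potential consisting of a Rosenthal-type sum for the separable part, half the quadratic form $\tfrac12\vec x^\top\vec A\vec x$ for the interaction part, and a correction term (the paper arrives at the same object by summing the players' costs as they are added one by one and showing the result is order-independent). However, your concrete candidate has a genuine gap in the correction term, and it is not one that can be repaired by ``pinning down the constant.'' Carrying out your own computation: with $\vec d=\vec y_i-\vec x_i$ and $\vec A$ symmetric, the change in $\tfrac12\vec x^\top\vec A\vec x$ is $\vec d^\top\vec A\vec x_{-i}+\tfrac12\vec y_i^\top\vec A\vec y_i-\tfrac12\vec x_i^\top\vec A\vec x_i$, whereas the change in the quadratic part of $\pi_i$ is $\vec d^\top\vec A\vec x_{-i}+\vec y_i^\top\vec A\vec y_i-\vec x_i^\top\vec A\vec x_i$. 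The missing half, $\tfrac12\vec y_i^\top\vec A\vec y_i-\tfrac12\vec x_i^\top\vec A\vec x_i$, contains the \emph{off-diagonal} self-interaction terms $\sum_{r\neq s}A_{rs}\,x_{i,r}\,x_{i,s}$, which are nonzero whenever a single player's strategy uses two interacting resources. Your correction $\tfrac12\sum_r A_{rr}x_r$ is a function of the aggregate load alone and captures only the diagonal part (via $x_{i,r}^2=x_{i,r}$); no scalar multiple of it can produce the off-diagonal contributions.

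The correct correction is $\tfrac12\sum_{j\in N}\vec x_j^\top\vec A\vec x_j$, which is a function of the full strategy profile rather than of the load vector, and whose change under a unilateral deviation of player $i$ is exactly $\tfrac12\vec y_i^\top\vec A\vec y_i-\tfrac12\vec x_i^\top\vec A\vec x_i$. With that replacement your verification goes through verbatim and coincides with the paper's potential
\begin{align*}
P(x)=\sum_{r\in R}\sum_{k=1}^{x_r}f_r(k)+\tfrac12\,\vec x^\top\vec A\vec x+\tfrac12\sum_{i\in N}\vec x_i^\top\vec A\vec x_i .
\end{align*}
Your candidate is correct only in the special cases (e.g.\ singleton strategies) where no strategy contains two resources with a nonzero mutual coefficient.
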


\begin{proof}
Fix an arbitrary unweighted resource graph game $G$ whose cost is determined by $\vec c \in \C$ and an arbitrary strategy profile $x \in X$.
Consider the process of adding the players to the game in order $1,\dots,n$ and let us sum their private costs. In the following, we write $\vec x_{\leq i} = \sum_{j \in N : j \leq i} \vec x_i$ for the load vector of players up to $i$. Let $P(x)$ be the sum of the private costs of the player added to the game when adding them in order $1,\dots,n$. We define the function $\vec f : \Z_{\geq 0}^m \to \R^m$ as $\vec f(\vec x) = \bigl(f_1(x_1),\dots,f_m(x_m)\bigr)^\top$ and obtain
\begin{align*}
P(x) &= \sum_{i \in N} \vec x_i^\top \Bigl[ \vec f(\vec x_{\leq i}) + \vec A \vec x_{\leq i}	\Bigr].
\end{align*}
We have
\begin{align*}
\sum_{i \in N} \vec x_i^\top \vec f(\vec x_{\leq i}) = \sum_{r \in R} \sum_{k = 1}^{x_r} f_r(k)
\end{align*}
as well as
\begin{align*}
\sum_{i \in N} \vec x_i^\top \vec A \vec x_{\leq i}	&= \sum_{i \in N} \vec x_i^{\top} \vec A \Bigl(\sum_{j \in N : j \leq i} \vec x_j\Bigl)\\
& = \frac{1}{2} \sum_{i \in N} \sum_{j \in N} \vec x_i^\top \vec A \vec x_j + \frac{1}{2} \sum_{i \in N} \vec x_i^\top \vec A \vec x_i \\
&= \frac{1}{2}\vec x^\top \vec A \vec x + \frac{1}{2} \sum_{i \in N} \vec x_i^\top \vec A \vec x_i,
\end{align*}
where for the second equation we used the symmetry of $\vec A$. We obtain
\begin{align*}
P(x) = \sum_{r \in R} \sum_{k=1}^{x_r} f_r(k)+ \frac{1}{2} \vec x^\top \vec A \vec x + \frac{1}{2} \sum_{i \in N} \vec x_i^\top  \vec A \vec x_i.
\end{align*}
This shows that $P(x)$ is invariant under a reordering of the players. Next, consider a deviation of an arbitrary player. Since $P(x)$ is invariant under a reordering of the players, it is without loss of generality to assume that player~$n$ deviates.
We obtain
\begin{align*}
P(\vec y_i, x_{-i}) - P(x) &= \vec y_n^\top \Bigl[ \vec f(\vec x_{\leq n-1} + \vec y_n) + \vec A(\vec x_{\leq n-1} + y_n)\Big]\\
&\quad - \vec x_n^\top \Bigl[ \vec f(\vec x_{\leq n}) + \vec A(\vec x_{\leq n})\Big] \\
&= \pi_n(\vec y_i, x_{-i}) - \pi_n(x).
\end{align*}
We conclude that $P$ is an exact potential function and, hence,  $G$ admits a pure Nash equilibrium. Since $G$ was chosen arbitrarily, the result follows.
\end{proof}

\subsection{Proof of Theorem~\ref{thm:consistency_unweighted}: \ref{item:main1}. $\Rightarrow$ \ref{item:main2}.}

In the following, we show that statement 2 of Theorem~\ref{thm:consistency_unweighted} is a necessary condition for the consistency of $\C$.
We prove this by constructing for any given $\vec c \in \C$ a family of different resource graph games whose cost functions are $4$-fold compositions of $\vec c$. All these games will have the following symmetry property, which we will use to establish that $\vec c$ is indeed of the form \eqref{eq:cost-function-unweighted}. 

\begin{definition}
Let $A, B \in \mathbb{R}$. We say a game $G = (N, X, (\pi_i)_{i \in N})$ is $(A, B)$-symmetric for players $i, j \in N$, if for any strategy profile $x \in X$, the following two statements are fulfilled:
\begin{itemize}
  \item $\pi_i(x) = A$ and $\pi_j(x) = B$ or $\pi_i(x) = B$ and $\pi_j(x) = A$.
  \item There are $\vec y_i \in X_i$ and $\vec y_j \in X_j$ such that $\pi_i(\vec y_i, x_{-i}) = \pi_j(x)$ and $\pi_j(\vec y_j, x_{-j}) = \pi_i(x)$.
\end{itemize}
\end{definition}

The following lemma shows a key property for $(A,B)$-symmetric games.

\begin{lemma}
\label{lem:symmetric-game}
  If a game $G$ is $(A, B)$-symmetric for players $i, j \in N$ and admits a pure Nash equilibrium, then $A = B$.
\end{lemma}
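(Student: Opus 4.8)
The plan is to exploit the two defining properties of $(A,B)$-symmetry directly, combined with the equilibrium inequalities, without any computation. Let $x$ be a pure Nash equilibrium of $G$, which exists by hypothesis. By the first property of $(A,B)$-symmetry applied to $x$, the pair $(\pi_i(x), \pi_j(x))$ equals either $(A,B)$ or $(B,A)$. Since the two branches of this disjunction are symmetric under interchanging the roles of $i$ and $j$, I would argue without loss of generality that $\pi_i(x) = A$ and $\pi_j(x) = B$.

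Next I would invoke the second property at the same profile $x$ to obtain deviations $\vec y_i \in X_i$ and $\vec y_j \in X_j$ satisfying $\pi_i(\vec y_i, x_{-i}) = \pi_j(x) = B$ and $\pi_j(\vec y_j, x_{-j}) = \pi_i(x) = A$; informally, $\vec y_i$ lets player $i$ attain the cost that $j$ currently incurs, and symmetrically for $\vec y_j$. The key step is then to feed these identities into the equilibrium condition: because $x$ is a pure Nash equilibrium, no unilateral deviation can strictly lower a player's cost, so $\pi_i(x) \leq \pi_i(\vec y_i, x_{-i})$ and $\pi_j(x) \leq \pi_j(\vec y_j, x_{-j})$. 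Substituting the values yields $A \leq B$ from player $i$'s deviation and $B \leq A$ from player $j$'s deviation, forcing $A = B$.

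There is essentially no analytic obstacle here; the argument is a four-line chain of inequalities. The only point deserving care is the justification of the reduction to the case $\pi_i(x) = A$, $\pi_j(x) = B$: one must confirm that the alternative branch $\pi_i(x) = B$, $\pi_j(x) = A$ produces the very same two inequalities (merely with the two equilibrium conditions exchanged), so that the conclusion $A = B$ is reached irrespective of which branch of the first property is realized at $x$. Thus the existence of a single equilibrium, together with the cost-swapping deviations guaranteed by symmetry, is exactly what collapses $A$ and $B$.
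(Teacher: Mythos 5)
Your proof is correct and follows essentially the same argument as the paper: both use the cost-swapping deviations from the second symmetry property together with the two equilibrium inequalities to obtain $A \leq B$ and $B \leq A$. The paper phrases it as a single chain $\pi_i(x) \leq \pi_i(\vec y_i, x_{-i}) = \pi_j(x) \leq \pi_j(\vec y_j, x_{-j}) = \pi_i(x)$ and then invokes $\{A,B\} = \{\pi_i(x), \pi_j(x)\}$, which sidesteps your (harmless) case distinction, but the substance is identical.
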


\begin{proof}
  Let $x \in X$ be a pure Nash equilibrium for $G$.
  Because $G$ is symmetric for $i$ and $j$, there are are $\vec y_i \in X_i$ and $\vec y_j \in X_j$ such that $\pi_i(\vec y_i, x_{-i}) = \pi_j(x)$ and $\pi_j(\vec y_j, x_{-j}) = \pi_i(x)$. Because $x$ is a pure Nash equilibrium, we obtain
  \begin{align*}
    \pi_i(x) \leq \pi_i(\vec y_i, x_{-i}) = \pi_j(x) \leq \pi_j(\vec y_j, x_{-j}) = \pi_i(x)
  \end{align*}
  and hence $\pi_i(x) = \pi_j(x)$. Note that symmetry of $G$ implies $\{A, B\} = \{\pi_i(x), \pi_j(x)\}$ and therefore $A = B$.
\end{proof}

We proceed to prove a first functional equation that needs to be satisfied for a set of consistent cost functions that is closed under composition. The equation states that a discrete version of the Jacobian of the cost function must be symmetric. For the proof, we construct a suitable $(A,B)$-symmetric game.

\begin{lemma}
\label{lem:symmetric-jacobi-unweighted}
Let $\C$ be closed under composition and consistent for unweighted resource graph games. Then, for all $\vec c \in \C$,  $\vec c : \Z_{\geq 0}^m \to \R^m$ with $m \in \N$, we have 
  $$c_r(\vec x + \vec 1_{\{r, s\}}) - c_r(\vec x + \vec 1_{r}) = c_s(\vec x + \vec 1_{\{r, s\}}) - c_s(\vec x + \vec 1_{s})$$ 
for all $r, s \in [m]$ and all $\vec x \in \mathbb{Z}_{\geq 0}^m$.
\end{lemma}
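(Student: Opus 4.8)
The plan is to derive the claimed identity from Lemma~\ref{lem:symmetric-game}. Fix $\vec c\in\C$ with $\vec c:\Z_{\geq 0}^m\to\R^m$, two resources $r\neq s$ (if $r=s$ both sides coincide and there is nothing to prove), and a background vector $\vec x\in\Z_{\geq 0}^m$. I would construct a single unweighted resource graph game $G$ that is $(A,B)$-symmetric for two designated players $i,j$, with
\begin{align*}
A &= c_r(\vec x + \vec 1_{\{r,s\}}) + c_s(\vec x + \vec 1_s), &
B &= c_r(\vec x + \vec 1_r) + c_s(\vec x + \vec 1_{\{r,s\}}).
\end{align*}
Since $A-B$ equals the left-hand side minus the right-hand side of the asserted equation, it suffices to prove $A=B$. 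The cost function of $G$ will lie in $\C$, so consistency guarantees a pure Nash equilibrium, and Lemma~\ref{lem:symmetric-game} then yields $A=B$ immediately.

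To build $G$ I would take four disjoint copies of the resource set of $\vec c$, endowed with the cost function $\vec c^4=\vec c\oplus\vec c\oplus\vec c\oplus\vec c$, which lies in $\C$ because $\C$ is closed under composition; write $t^{(k)}$ for resource $t$ in the $k$-th copy. The background load $\vec x$ is installed on every copy by dummy players with a single singleton strategy, so each resource $t^{(k)}$ carries exactly $x_t$ of them. On top of this I introduce the two active players, each with two strategies,
\begin{align*}
\text{player } i:&\quad \{r^{(1)},s^{(2)}\}\ \text{ or }\ \{r^{(3)},s^{(4)}\},\\
\text{player } j:&\quad \{s^{(1)},r^{(4)}\}\ \text{ or }\ \{s^{(3)},r^{(2)}\},
\end{align*}
so that $i$ and $j$ share copy $1$ when both play their first strategy, share copy $3$ when both play their second, and otherwise meet on copies $2$ and $4$ in a rotated fashion. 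The asymmetry between $i$ and $j$ is deliberate, and its purpose is that whenever the two players meet on a copy, exactly one of $r,s$ receives a second unit of load, with the roles of $r$ and $s$ alternating from copy to copy.

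The verification then amounts to computing, for each of the four strategy combinations, the resulting load on each copy and reading off the two private costs (each a sum over the two resources the active player occupies, evaluated at the load of the corresponding copy). One finds that when $i$ and $j$ make matching choices the pair $(\pi_i,\pi_j)$ equals $(A,B)$, and when they make opposite choices it equals $(B,A)$; hence $\{\pi_i(x),\pi_j(x)\}=\{A,B\}$ in every profile, which is the first clause of $(A,B)$-symmetry. The second clause is read off the same four-entry table: a unilateral switch of either player always relocates that player's cost to the value currently held by the opponent, so the witnessing deviations $\vec y_i,\vec y_j$ demanded by the definition are simply the players' alternative strategies. With both clauses established, Lemma~\ref{lem:symmetric-game} applies.

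I expect the main obstacle to be arranging the first clause of $(A,B)$-symmetry, namely that the cost pair is $\{A,B\}$ in \emph{all} four profiles and not merely in the two ``off-diagonal'' ones. Any mirror-symmetric construction, in which $i$ and $j$ carry mirror-image strategies on a common copy, degenerates on the diagonal profiles to $\pi_i=\pi_j$ and therefore only certifies the vacuous instance $A=B$ with equality forced a priori. The four-copy assignment above is engineered precisely to eliminate every profile in which the two players occupy symmetric roles: the two diagonal profiles are routed through a second, independent pair of copies that reproduces the same unordered pair $\{A,B\}$ without a genuine collision between $i$ and $j$. Choosing which resource of which copy each of the four strategies occupies, so that the four load vectors come out as the intended mixture of $\vec x+\vec 1_r$, $\vec x+\vec 1_s$, and $\vec x+\vec 1_{\{r,s\}}$, is the delicate part; the remainder is routine bookkeeping.
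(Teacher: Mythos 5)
Your proposal is correct and is essentially the paper's own proof: after relabeling copies $3$ and $4$, your strategy sets $\{r^{(1)},s^{(2)}\},\{r^{(3)},s^{(4)}\}$ and $\{s^{(1)},r^{(4)}\},\{s^{(3)},r^{(2)}\}$ coincide with the paper's $X_1=\{\mathbf{1}_{\{r_1,s_2\}},\mathbf{1}_{\{s_3,r_4\}}\}$, $X_2=\{\mathbf{1}_{\{s_1,r_3\}},\mathbf{1}_{\{r_2,s_4\}}\}$ on $\vec c^4$ with dummy players installing $\vec x$, and the appeal to $(A,B)$-symmetry and Lemma~\ref{lem:symmetric-game} is the same. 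The only cosmetic difference is that the paper's dummy players each occupy one resource in all four copies simultaneously rather than being singletons, which yields the same loads.
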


\begin{figure}
  \centering
  
  \scalebox{0.9}{%
  \begin{tikzpicture}[scale=1.8]
      
    \begin{scope}
      \draw[rounded corners, fill=orange!20, opacity=0.5] (0.7, -0.3) -- ++(0, 0.9) -- ++(3, 0) -- 
        ++(0, 0.7) -- ++(0.6, 0) -- ++(0, -0.9) -- ++(-3, 0) -- ++(0, -0.7) -- cycle;
    \end{scope}
      
    \begin{scope}[xshift=1.5cm, yshift=0.5cm]
      \draw[fill=blue!20, opacity=0.5, rotate=-45] (0, 0) ellipse (1cm and 0.3cm);
    \end{scope}
      
    \begin{scope}[xshift=2.5cm, yshift=0.5cm]
      \draw[fill=red!20, opacity=0.5, rotate=-45] (0, 0) ellipse (1cm and 0.3cm);
    \end{scope}
      
    \begin{scope}[xshift=3.5cm, yshift=0.5cm]
      \draw[fill=cyan!20, opacity=0.5, rotate=-45] (0, 0) ellipse (1cm and 0.3cm);
    \end{scope}
    
    \node at (1.35, 0.8) {$\vec x_{1}$};
    \node at (3.35, 0.8) {$\vec x'_{1}$};
    
    \node at (2.35, 0.8) {$\vec x_{2}$};
    \node at (4.1, 0.6) {$\vec x'_{2}$};
    
    \foreach \x in {1,2,3,4}
    {
    
        \node[res] (r\x) at (\x,1) {$r_{\x}$};
        \node[res] (s\x) at (\x,0) {$s_{\x}$};
        
    }
    
    \draw[black!40, <->, thick] (r1) -- (s1);
    \draw[black!40, <->, thick] (r2) -- (s2);
    \draw[black!40, <->, thick] (r3) -- (s3);
    \draw[black!40, <->, thick] (r4) edge[bend right=10] (s4);

    \node at (2.5, -0.5) {(i)};
  \end{tikzpicture}
  \hspace{0.8cm}
  \begin{tikzpicture}[scale=1.8]
      
    \begin{scope}
      \draw[rounded corners, fill=blue!20, opacity=0.5] (0.7, -0.3) -- ++(0, 1.6) -- ++(1.6, 0) -- 
        ++(0, -0.6) -- ++(-1, 0) -- ++(0, -1) -- cycle;
    \end{scope}
      
    \begin{scope}[xscale=-1, xshift=-5cm]
      \draw[rounded corners, fill=cyan!20, opacity=0.5] (0.7, -0.3) -- ++(0, 1.6) -- ++(1.6, 0) -- 
        ++(0, -0.6) -- ++(-1, 0) -- ++(0, -1) -- cycle;
    \end{scope}
      
    \begin{scope}
      \draw[rounded corners, fill=red!20, opacity=0.5] (1.8, -0.3) -- ++(0, 1.9) -- ++(2.4, 0) -- 
        ++(0, -0.9) -- ++(-0.4, 0) -- ++(0, 0.7) -- ++(-1.6, 0) -- ++(0, -1.7) -- cycle;
    \end{scope}
      
    \begin{scope}[xscale=-1, xshift=-5cm]
      \draw[rounded corners, fill=orange!20, opacity=0.5] (1.8, -0.3) -- ++(0, 2.2) -- ++(2.4, 0) -- 
        ++(0, -1.2) -- ++(-0.4, 0) -- ++(0, 1) -- ++(-1.6, 0) -- ++(0, -2) -- cycle;
    \end{scope}
    
    \node at (1.5, 1) {$\vec x_{1}$};
    \node at (3.5, 1) {$\vec x'_{1}$};
    
    \node at (4, 1.45) {$\vec x_{2}$};
    \node at (1, 1.5) {$\vec x'_{2}$};
    
    \foreach \x in {1,2,3,4}
    {
        \node[res] (r\x) at (\x,1) {$r_{\x}$};
        \node[res] (s\x) at (\x,0) {$s_{\x}$};
        \draw[black!40, <->, thick] (r\x) -- (s\x);
    }
    
    \node at (2.5, -0.5) {(ii)};
        
  \end{tikzpicture}
  }
  \vspace{-0.3cm}
  \caption{Games constructed for the proofs of Lemmas~\ref{lem:symmetric-jacobi-unweighted} and \ref{lem:game2}, respectively. Each clique represents a copy of the resource set (resources other than $r, s$ and dummy players are omitted). Player $1$ chooses among strategies $\vec x_1$ and $\vec x'_1$, player $2$ chooses among strategies $\vec x_2$ and $\vec x'_2$.\label{fig:games1-2}}
\end{figure}
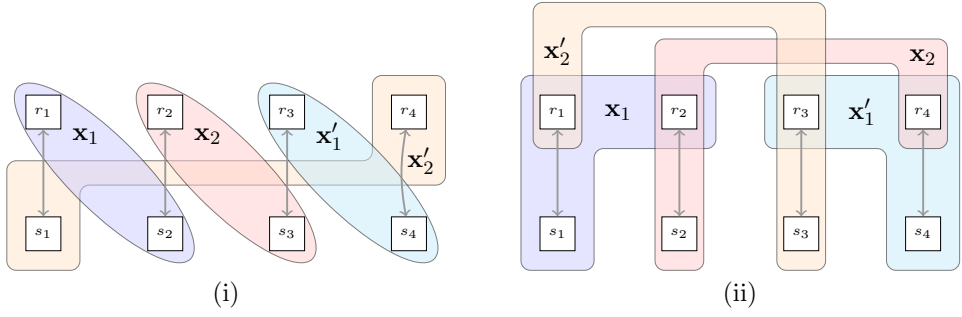

\begin{proof}
Let $\vec c \in \C$, $r, s \in [m]$ with $r \neq s$, let $\vec x \in \Z^m_{\geq 0}$ be arbitrary. 
Since $\C$ is closed under composition, we have that $\vec c^4 : \Z_{\geq 0}^{4m} \to \R^{4m}$ is also contained in $\C$.
Consider the following game with $4m$ resources and cost function $\vec c^{4}$. 
For $k \in [4]$ and $t \in [m]$ we denote the $k$-th copy of resource $t$ by $t_k$.
For each original resource $t \in [m]$, there are $x_{t}$ dummy players whose only strategy is $\sum_{k \in [4]} \mathbf{1}_{t_k}$. In addition there are two players $1$ and $2$ with strategy sets $X_{1} = \{\mathbf{1}_{\{r_1, s_2\}},\, \mathbf{1}_{\{s_3, r_4\}}\}$ and $X_{2} = \{\mathbf{1}_{\{s_1, r_3\}},\, \mathbf{1}_{\{r_2, s_4\}}\}$. See Figure~\ref{fig:games1-2}(i) for a depiction of the strategy space.

 It is straightforward to check that the above game is $(A, B)$-symmetric for players $1$ and $2$ with
 \begin{align*}
 A = c_r(\vec x + \vec 1_{\{r,s\}}) + c_s(\vec x + \vec 1_{s}) \quad \text{and} \quad B = c_r(\vec x + \vec 1_{r}) + c_s(\vec x + \vec 1_{\{r, s\}}).
 \end{align*}
Since $\C$ is consistent, the thus defined game has a pure Nash equilibrium and we conclude
  $A = B$ by Lemma~\ref{lem:symmetric-game}, which completes the proof of the lemma.
\end{proof}

The following two lemmas establish that the discrete Hessian of each $c_r$ for $r \in [m]$ must be diagonal. For the proof of these two lemmas, we use the symmetry of the Jacobian shown in Lemma~\ref{lem:symmetric-jacobi-unweighted} together with suitably constructed $(A,B)$-symmetric games.

\begin{lemma}
\label{lem:game2}
Let $\C$ be closed under composition and consistent for unweighted resource graph games. Then, for all $\vec c \in \C$, $\vec c : \Z_{\geq 0}^m \to \R^m$ with $m \in \N$, the following two functional equations are satisfied for all $r, s \in [m]$ with $r \neq s$ and all $\vec x \in \mathbb{Z}_{\geq 0}^R$ with $x_r > 0$:
  \begin{itemize}
   \item[(a)] $c_r(\vec x + \vec 1_{s}) - c_r(\vec x) = c_r(\vec x + \vec 1_{\{r, s\}}) - c_r(\vec x + \vec 1_{r})$ and
   \item[(b)] $c_r(\vec x + 2 \cdot \vec 1_{s}) - c_r(\vec x + \vec 1_{s}) = c_r(\vec x + \vec 1_{s}) - c_r(\vec x)$.
 \end{itemize}
\end{lemma}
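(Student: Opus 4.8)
The plan is to prove both identities with the same machinery used for Lemma~\ref{lem:symmetric-jacobi-unweighted}: for a fixed $\vec c\in\C$, a pair $(r,s)$ with $r\neq s$, and a base point $\vec x$ with $x_r>0$, I would build an $(A,B)$-symmetric game whose cost function is the fourfold composition $\vec c^4\in\C$ (legitimate since $\C$ is closed under composition), invoke consistency to obtain a pure Nash equilibrium, and conclude $A=B$ via Lemma~\ref{lem:symmetric-game}. The construction differs from the one in Lemma~\ref{lem:symmetric-jacobi-unweighted} in two ways: the dummy players no longer load each copy to $\vec x$ but to $\vec x-\vec 1_r$ on the copies where an active player occupies $r$ (this is precisely where $x_r>0$ enters, as it lets a player's own unit bring the load of $r$ back to $x_r$ rather than to $x_r+1$), and on one copy a single active player occupies \emph{both} $r$ and $s$, so that it collects two cost terms evaluated at the \emph{same} load.

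For part (a), I would aim for the two values $A=c_r(\vec x+\vec 1_s)+c_s(\vec x+\vec 1_s)$ and $B=c_r(\vec x)+c_s(\vec x+\vec 1_{\{r,s\}})$. Here a strategy of the first kind puts a player on both $r$ and $s$ of one copy whose dummy load is $\vec x-\vec 1_r$ (final load $\vec x+\vec 1_s$, value $A$), while a strategy of the second kind puts a player on $r$ of a copy with dummy load $\vec x-\vec 1_r$ (final load $\vec x$) and on $s$ of a copy with final load $\vec x+\vec 1_{\{r,s\}}$ (value $B$). Giving each of the two active players one strategy of each kind and arranging the copies symmetrically (cf.\ Fig.~\ref{fig:games1-2}(ii)), one obtains a game in which the two private costs take the values $A$ and $B$ in every strategy profile. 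Lemma~\ref{lem:symmetric-game} then yields $A=B$, i.e.\ $c_r(\vec x+\vec 1_s)-c_r(\vec x)=c_s(\vec x+\vec 1_{\{r,s\}})-c_s(\vec x+\vec 1_s)$, and substituting the Jacobian identity of Lemma~\ref{lem:symmetric-jacobi-unweighted} (which rewrites $c_s(\vec x+\vec 1_{\{r,s\}})-c_s(\vec x+\vec 1_s)$ as $c_r(\vec x+\vec 1_{\{r,s\}})-c_r(\vec x+\vec 1_r)$) gives exactly (a).

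For part (b) I would avoid a new game and instead derive it from (a) and Lemma~\ref{lem:symmetric-jacobi-unweighted} by pure difference-operator algebra. Writing $\Delta_q g(\vec x):=g(\vec x+\vec 1_q)-g(\vec x)$, statement (b) reads $\Delta_s c_r(\vec x+\vec 1_s)=\Delta_s c_r(\vec x)$, while Lemma~\ref{lem:symmetric-jacobi-unweighted} reads $\Delta_s c_r(\vec y+\vec 1_r)=\Delta_r c_s(\vec y+\vec 1_s)$ for every $\vec y\ge\vec 0$. Applying this identity at $\vec y=\vec x-\vec 1_r$ and at $\vec y=\vec x-\vec 1_r+\vec 1_s$ (both admissible because $x_r>0$) turns (b) into $\Delta_r c_s(\vec z+\vec 1_s)=\Delta_r c_s(\vec z)$ with $\vec z=\vec x-\vec 1_r+\vec 1_s$, which is precisely statement (a) applied to the ordered pair $(s,r)$ at base point $\vec z$ (admissible since $z_s=x_s+1>0$). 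Thus (b) follows once (a) has been established for all ordered pairs.

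I expect the main obstacle to be the construction for (a): designing the two active players' strategy sets together with the per-copy dummy loads so that the game is genuinely $(A,B)$-symmetric, that is, checking that in each of the four strategy profiles the realized loads are exactly the intended configurations on the right copies and that both the ``$\{A,B\}$ as an unordered pair'' condition and the swap-deviation condition of the symmetry definition hold. Once this bookkeeping is verified, the passage to (a) via Lemma~\ref{lem:symmetric-jacobi-unweighted} and the reduction of (b) to (a) are routine.
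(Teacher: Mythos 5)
Your overall strategy is the paper's: build a game with cost $\vec c^4$, force $A=B$ via Lemma~\ref{lem:symmetric-game}, obtain the intermediate identity $c_r(\vec x+\vec 1_s)-c_r(\vec x)=c_s(\vec x+\vec 1_{\{r,s\}})-c_s(\vec x+\vec 1_s)$, and then massage it with Lemma~\ref{lem:symmetric-jacobi-unweighted}. Those last steps are sound: the passage from the intermediate identity to (a) is exactly the paper's, and your reduction of (b) to (a) applied to the ordered pair $(s,r)$ at $\vec z=\vec x-\vec 1_r+\vec 1_s$ (with $z_s=x_s+1>0$) is a correct, mildly repackaged version of the paper's substitution argument. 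The gap is in the game realizing $A=c_r(\vec x+\vec 1_s)+c_s(\vec x+\vec 1_s)$ and $B=c_r(\vec x)+c_s(\vec x+\vec 1_{\{r,s\}})$ with two-element strategies, and it is not mere bookkeeping: this choice of $A$ and $B$ cannot be realized by an $(A,B)$-symmetric game. The term $c_s(\vec x+\vec 1_{\{r,s\}})$ is paid by a player occupying only $s$ on a copy whose final load is $\vec x+\vec 1_{\{r,s\}}$; the extra unit on $r$ of that copy must come from somewhere. If it comes from the other active player, that player pays $c_r(\vec x+\vec 1_{\{r,s\}})$, a term appearing in neither $A$ nor $B$, so her total cannot formally equal $A$ or $B$ for arbitrary $\vec c$. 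If it comes from dummies, that copy involves only one active player; the same analysis applied to the remaining terms (each of $c_r(\vec x+\vec 1_s)+c_s(\vec x+\vec 1_s)$ at a lone load $\vec x+\vec 1_s$, and $c_r(\vec x)$ at a lone load $\vec x$) shows that then \emph{no} copy can carry both active players. But if the players never interact, each player's cost is determined by her own strategy alone, so in the profile where both play the first-kind strategy both receive $A$, violating the requirement that the costs be $\{A,B\}$ in every profile. Either way the construction collapses.

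The paper escapes this by enlarging the strategies to three elements, $\mathbf{1}_{\{r_1,s_1,r_2\}}$ versus $\mathbf{1}_{\{r_1,r_3,s_3\}}$, etc.: the extra unit on $r$ of the shared copy is supplied by the other active player, and the resulting term $c_r(\vec x+\vec 1_{\{r,s\}})$ is deliberately made to appear in \emph{both} $A$ and $B$ (first term of $A$, third term of $B$), so it cancels in $A-B$ and the surviving difference is exactly your intended intermediate identity. So your target equation is right, but you need the three-term $A$ and $B$ of the paper's proof (or some equivalent device that absorbs the spurious $c_r(\vec x+\vec 1_{\{r,s\}})$ term symmetrically); with the two-term versions the "main obstacle" you flagged is in fact insurmountable.
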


\begin{proof}
Let $\vec c \in \C$, $\vec c : \Z_{\geq 0}^m \to \R^m$, $r, s \in [m]$ with $r \neq s$, and $\vec x \in \Z^m_{\geq 0}$ with $x_r > 0$ be arbitrary. Since $\C$ is closed under composition, the function $\vec c^4 : \Z_{\geq 0}^{4m} \to \R^{4m}$ is also contained in $\C$. Consider the following game with $4m$ resources and cost function $\vec c^{4}$. For $k \in [4]$ and $t \in [m]$, we denote the $k$-th copy of resource $t$ by $t_k$.
For each original resource $t \in [m] \setminus \{r\}$, there are $x_{t}$ dummy players whose only strategy is $\sum_{k \in [4]} \mathbf{1}_{t_k}$. There also are $x_r - 1$ dummy players for resource $r$ whose only strategy is $\sum_{k \in [4]} \mathbf{1}_{r_k}$.
  In addition there are two players $1$ and $2$ with strategy sets 
  \begin{align*}
     X_{1} = \big\{\mathbf{1}_{\{r_1, s_1, r_2\}},\; \mathbf{1}_{\{r_3, r_4, s_4\}}\big\} \quad \text{and} \quad
     X_{2} = \big\{\mathbf{1}_{\{r_1, r_3, s_3\}},\; \mathbf{1}_{\{r_2, s_2, r_4\}}\big\}.
   \end{align*}
  See Figure~\ref{fig:games1-2}(ii) for a depiction of the strategy space.
  It is easy to check that the above game is $(A, B)$-symmetric for players $1$ and $2$ with 
  \begin{align*}
    A & = c_r(\vec x + \vec 1_{\{r,s\}}) \,+\, c_s(\vec x + \vec 1_{\{r, s\}}) \,+\, c_r(\vec x)  \text{ and }\\
    B & = c_r(\vec x + \vec 1_{s}) \,+\, c_s(\vec x + \vec 1_{s}) \,+\, c_r(\vec x + \vec 1_{\{r, s\}}).
  \end{align*}
  By consistency of $\vec c$, the game must have a pure Nash Equilibrium and thus $A = B$ by Lemma~\ref{lem:symmetric-game}. Subtracting the first and third term of $A$ and the second term of $B$ on both sides yields
  \begin{align}
    c_s(\vec x + \vec 1_{\{r, s\}}) - c_s(\vec x + \vec 1_{s}) & = c_r(\vec x + \vec 1_{s}) - c_r(\vec x).\label{eq:game-2}
  \end{align}
  Applying Lemma~\ref{lem:symmetric-jacobi-unweighted} to the left-hand side of \eqref{eq:game-2} yields
  \begin{align*}
     c_r(\vec x + \vec 1_{\{r, s\}}) - c_r(\vec x + \vec 1_{r}) = c_r(\vec x + \vec 1_{s}) - c_r(\vec x),
  \end{align*}
  which proves (a).
  
  Applying Lemma~\ref{lem:symmetric-jacobi-unweighted} to the right-hand side of \eqref{eq:game-2} instead, yields
  \begin{align*}
     c_s(\vec x + \vec 1_{\{r, s\}}) \,-\, c_s(\vec x + \vec 1_{s}) & = c_s(\vec x + \vec 1_{s}) - c_s(\vec x - \vec{1}_r + \vec{1}_s),
  \end{align*}
  which is equivalent to (b) when substituting $\vec x$ for $\vec x - \vec 1_r + \vec 1_s$ and then swapping the roles of $r$ and~$s$.
\end{proof}

\begin{lemma}
\label{lem:game3}
Let $\C$ be closed under composition and consistent for unweighted resource graph games. Then, for all $\vec c \in \C$, $\vec c : \Z_{\geq 0}^m \to \R^m$ with $m \in \N$ we have
  $$c_r(\vec x + \vec 1_{s}) \,-\, c_r(\vec x) \;=\; c_r(\vec x + \vec 1_{\{s, t\}}) \,-\, c_r(\vec x + \vec 1_{t})$$ 
  for all $r, s, t \in [m]$ with $r, s, t$ pairwise distinct and all $\vec x \in \mathbb{Z}_{\geq 0}^m$ with $x_r > 0$.
\end{lemma}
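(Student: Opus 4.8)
The plan is to follow the template of Lemmas~\ref{lem:symmetric-jacobi-unweighted} and~\ref{lem:game2}: fix $\vec c \in \C$, pairwise distinct $r,s,t \in [m]$, and $\vec x \in \Z_{\geq 0}^m$ with $x_r > 0$, pass to the four-fold composition $\vec c^4 \in \C$, and construct an $(A,B)$-symmetric two-player game whose pure Nash equilibrium (guaranteed by consistency) forces $A = B$ via Lemma~\ref{lem:symmetric-game}. As before, for each $\tau \in [m]\setminus\{r\}$ I would install $x_\tau$ dummy players with unique strategy $\sum_{k \in [4]} \mathbf{1}_{\tau_k}$, and for $r$ install $x_r - 1$ such dummies, so that in the absence of the two real players each copy carries the base load $\vec x - \vec 1_r$. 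The two real players each get two strategies that add one unit of $r$ to exactly the copies they read (restoring the $r$-load to $x_r$ there) and sprinkle units of $s$ and $t$ so that, in the reference profile, the four copies realize the load vectors $\vec x$, $\vec x + \vec 1_s$, $\vec x + \vec 1_t$, and $\vec x + \vec 1_{\{s,t\}}$, with the two players reading $c_r$ at two of these points each.

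Concretely, I would take
\[
X_1 = \bigl\{\mathbf{1}_{\{r_1, s_1, r_2, t_4\}},\; \mathbf{1}_{\{r_1, r_2, s_2, t_3\}}\bigr\}, \qquad
X_2 = \bigl\{\mathbf{1}_{\{t_1, r_3, s_3, r_4\}},\; \mathbf{1}_{\{t_2, r_3, r_4, s_4\}}\bigr\},
\]
which a direct check of all four strategy profiles shows to be $(A,B)$-symmetric for players $1$ and $2$ with
\begin{align*}
A &= c_r(\vec x + \vec 1_{\{s,t\}}) + c_s(\vec x + \vec 1_{\{s,t\}}) + c_r(\vec x) + c_t(\vec x + \vec 1_t), \\
B &= c_r(\vec x + \vec 1_s) + c_s(\vec x + \vec 1_s) + c_r(\vec x + \vec 1_t) + c_t(\vec x + \vec 1_{\{s,t\}}).
\end{align*}
Consistency together with Lemma~\ref{lem:symmetric-game} then yields $A = B$, which rearranges to
\begin{align*}
& c_r(\vec x + \vec 1_{\{s,t\}}) + c_r(\vec x) - c_r(\vec x + \vec 1_s) - c_r(\vec x + \vec 1_t) \\
&\quad = \bigl[c_t(\vec x + \vec 1_{\{s,t\}}) - c_t(\vec x + \vec 1_t)\bigr] - \bigl[c_s(\vec x + \vec 1_{\{s,t\}}) - c_s(\vec x + \vec 1_s)\bigr].
\end{align*}
The left-hand side is exactly the difference of the two sides of the claimed identity, so it remains to see that the right-hand side vanishes; this is precisely Lemma~\ref{lem:symmetric-jacobi-unweighted} applied to the pair $(s,t)$, which equates the two bracketed differences on the right.

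I expect the main obstacle to lie in the design of the strategy sets rather than in the final algebra. Two points need care. First, the auxiliary readings of $c_s$ and $c_t$ that a player unavoidably incurs when passing through the copies carrying extra $s$- or $t$-load must be arranged to \emph{cancel} rather than \emph{reinforce}: this is why player $1$ is made to read $c_s$ at the ``high'' point $\vec x + \vec 1_{\{s,t\}}$ but $c_t$ at the ``low'' point $\vec x + \vec 1_t$, while player $2$ does the opposite, so that their combined contribution becomes the antisymmetric expression annihilated by Lemma~\ref{lem:symmetric-jacobi-unweighted}; placing the extra loads ``crosswise'' between the two players is the crux of the construction. Second, unlike in Lemmas~\ref{lem:symmetric-jacobi-unweighted} and~\ref{lem:game2}, realizing all four load vectors $\vec x$, $\vec x + \vec 1_s$, $\vec x + \vec 1_t$, $\vec x + \vec 1_{\{s,t\}}$ simultaneously forces each real player to touch three of the four copies and to share two copies with the opponent, so the verification of $(A,B)$-symmetry — that every one of the four profiles yields the cost pair $\{A,B\}$ and that either player can switch its own cost by a unilateral deviation — must be carried out by checking all four profiles explicitly rather than by invoking a single global player-swap symmetry.
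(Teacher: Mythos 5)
Your proof is correct and follows essentially the same strategy as the paper's: build an $(A,B)$-symmetric two-player game on $\vec c^4$ with dummy players fixing the background load, invoke Lemma~\ref{lem:symmetric-game} to get $A=B$, and clean up the residual $c_s$/$c_t$ terms with Lemma~\ref{lem:symmetric-jacobi-unweighted}. Your concrete gadget differs from the paper's (size-$4$ strategies spanning three copies versus the paper's size-$3$ strategies, and one application of Lemma~\ref{lem:symmetric-jacobi-unweighted} to the pair $(s,t)$ instead of two applications to $(r,s)$ and $(r,t)$), but I verified the $(A,B)$-symmetry of all four profiles and the final algebra, and everything checks out.
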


\begin{proof}
  Let $\vec c \in \C$, $\vec c : \Z_{\geq 0}^m \to \R^m$, $m \in \N$ be arbitrary, let $r, s, t \in [m]$ be pairwise distinct, and let $\vec x \in \Z^m_{\geq 0}$ with $x_r > 0$. 
  Let $\vec x' = \vec x_r - \vec 1_r$.
  Consider the following game with $4m$ resources and cost function $\vec c^{4}$. For $k \in [4]$ and $u \in [m]$, we denote the $k$-th copy of resource $u$ by $u_k$.
  For each resource $u \in [m] \setminus \{r\}$, there are $x'_{u}$ dummy players whose only strategy is $\sum_{k \in [4]} \mathbf{1}_{u_k}$.
  In addition there are two players $1$ and $2$ with strategy sets 
  \begin{align*}
     X_{1} = \big\{\mathbf{1}_{\{r_1, s_2, t_2\}},\,  \mathbf{1}_{\{s_3, t_3, r_4\}}\big\} \quad \text{and} \quad
     X_{2} = \big\{\mathbf{1}_{\{s_1, t_1, r_4\}},\,  \mathbf{1}_{\{r_2, s_4, t_4\}}\big\}.
   \end{align*}
   See Figure~\ref{fig:game3} for a depiction of the strategy space.
  It is easy to check that the above game is $(A, B)$-symmetric for players $1$ and $2$ with 
  \begin{align*}
    A & \; = \; c_r(\vec x' + \vec 1_{\{r,s,t\}}) \,+\, c_s(\vec x' + \vec 1_{\{s, t\}}) \,+\, c_t(\vec x' + \vec 1_{\{s, t\}}), \quad  \text{ and }\\
    B & \; = \; c_r(\vec x' + \vec 1_{r}) \,+\, c_s(\vec x' + \vec 1_{\{r,s,t\}}) \,+\, c_t(\vec x' + \vec 1_{\{r, s, t\}}).
  \end{align*}
  By consistency of $\mathcal{C}$ the game must have a pure Nash Equilibrium and thus $A = B$ by Lemma~\ref{lem:symmetric-game}. By subtracting the second term of $A$ and the first and third term of $B$ from both sides we obtain
  \begin{align*}
    & c_r(\vec x' + \vec 1_{\{r,s,t\}}) \,-\, 
    \underbrace{\big(c_t(\vec x' + \vec 1_{\{r, s, t\}}) - c_t(\vec x' + \vec 1_{\{s, t\}}) \big)}_{=\; c_r(\vec x' + \vec 1_{\{r, s, t\}}) - c_r(\vec x' + \vec 1_{\{r, s\}})} \,-\, c_r(\vec x' + \vec 1_{r}) \\ 
    = \ & 
    \underbrace{c_s(\vec x' + \vec 1_{\{r,s,t\}}) \,-\, c_s(\vec x' + \vec 1_{\{s, t\}})}_{=\; c_r(\vec x' + \vec 1_{\{r,s,t\}}) - c_r(\vec x' + \vec 1_{\{r, t\}})}.
  \end{align*}
  Applying the identities indicated above, which follow from Lemma~\ref{lem:symmetric-jacobi-unweighted}, and then using $\vec x' = \vec x - \vec 1_r$ yields $c_r(\vec x + \vec 1_{s}) - c_r(\vec x) = c_r(\vec x + \vec 1_{\{s, t\}}) - c_r(\vec x + \vec 1_{t})$.
\end{proof}

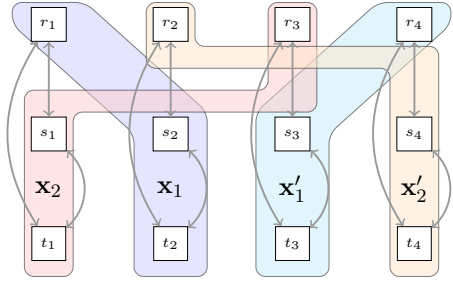
\begin{figure}
  \centering
  
  \scalebox{0.9}{%
  \hspace{0.2cm}
  \begin{tikzpicture}[scale=1.8, yscale=0.9]
      
    \begin{scope}
      \draw[rounded corners, fill=blue!20, opacity=0.5] (1.7, -0.3) -- ++(0, 1.3) -- ++(-1, 1) -- 
        ++(0, 0.2) -- ++(0.6, 0) -- ++(1, -1) -- ++(0, -1.5) -- cycle;
    \end{scope}
    
    \begin{scope}[xscale=-1, xshift=-5cm]
      \draw[rounded corners, fill=cyan!20, opacity=0.5] (1.7, -0.3) -- ++(0, 1.3) -- ++(-1, 1) -- 
        ++(0, 0.2) -- ++(0.6, 0) -- ++(1, -1) -- ++(0, -1.5) -- cycle;
    \end{scope}

    \begin{scope}
      \draw[rounded corners, fill=red!20, opacity=0.5] (0.8, -0.3) -- ++(0, 1.7) -- ++(2, 0) -- 
        ++(0, 0.8) -- ++(0.4, 0) -- ++(0, -1) -- ++(-2, 0) -- ++(0, -1.5) -- cycle;
    \end{scope}
      
    \begin{scope}[xscale=-1, xshift=-5cm]
      \draw[rounded corners, fill=orange!20, opacity=0.5] (0.8, -0.3) -- ++(0, 2.1) -- ++(2, 0) -- 
        ++(0, 0.4) -- ++(0.4, 0) -- ++(0, -0.6) -- ++(-2, 0) -- ++(0, -1.9) -- cycle;
    \end{scope}
      
    \node at (1, 0.5) {$\vec x_{2}$};
    \node at (4, 0.5) {$\vec x'_{2}$};
    
    \node at (2, 0.5) {$\vec x_{1}$};
    \node at (3, 0.5) {$\vec x'_{1}$};

    \foreach \x in {1,2,3,4}
    {
    
        \node[res] (r\x) at (\x,2) {$r_{\x}$};
        \node[res] (s\x) at (\x,1) {$s_{\x}$};
        \node[res] (t\x) at (\x,0) {$t_{\x}$};

        \draw[black!40, thick, <->] (r\x) -- (s\x);
        \draw[black!40, thick, <->] (s\x) edge[bend left=45] (t\x);
        \draw[black!40, thick, <->] (t\x) edge[bend left] (r\x);

    }
        
  \end{tikzpicture}
  }
  
  \caption{Game constructed for the proof of Lemma~\ref{lem:game3}. Each clique represents a copy of the resource set (resources other than $r, s, t$ and dummy players are omitted). Player $1$ chooses among strategies $\vec x_1$ and $\vec x'_1$, player $2$ chooses among strategies $\vec x_2$ and $\vec x'_2$.\label{fig:game3}}
\end{figure}

Given the form of the discrete Hessian established in Lemmas~\ref{lem:game2}~and~\ref{lem:game3}, we conclude now that the influence of the load on a resource $s$ on the cost of some other resource~$r$ must be linear. 
This is formalized in the following lemma, which follows by inductively applying our previous results.

\begin{restatable}{lemma}{restateLemDiagonalHesse}
\label{lem:diagonal-hesse-unweighted}
Let $\C$ be closed under composition and consistent for unweighted resource graph games. Then, for all $\vec c \in \C$, $\vec c : \Z_{\geq 0}^m \to \R^m$ with $m \in \N$ we have
  $$c_r(\vec x + \vec 1_{s}) - c_r(\vec x) = c_r(\vec y + \vec 1_{s}) - c_r(\vec y)$$ 
  for all $r, s \in [m]$ with $r \neq s$ and all $\vec x, \vec y \in \mathbb{Z}_{\geq 0}^m$ with $x_r, y_r > 0$.
\end{restatable}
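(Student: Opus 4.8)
The plan is to read Lemmas~\ref{lem:game2} and~\ref{lem:game3} as invariance statements for the discrete partial difference $\Delta_s c_r(\vec x) := c_r(\vec x + \vec 1_s) - c_r(\vec x)$, and to conclude that, for fixed distinct $r, s$, this quantity does not depend on $\vec x$ as long as $x_r > 0$. Indeed, rewriting $\vec x + \vec 1_{\{r,s\}} = (\vec x + \vec 1_r) + \vec 1_s$ turns Lemma~\ref{lem:game2}(a) into $\Delta_s c_r(\vec x) = \Delta_s c_r(\vec x + \vec 1_r)$; Lemma~\ref{lem:game2}(b) states directly that $\Delta_s c_r(\vec x + \vec 1_s) = \Delta_s c_r(\vec x)$; and the analogous rewriting of Lemma~\ref{lem:game3} gives $\Delta_s c_r(\vec x) = \Delta_s c_r(\vec x + \vec 1_t)$ for every third resource $t \notin \{r,s\}$. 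Thus a unit step in \emph{any} coordinate direction leaves $\Delta_s c_r$ unchanged, provided the lower endpoint of the step has positive $r$-load.

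First I would, starting from an arbitrary $\vec x$ with $x_r > 0$, repeatedly apply Lemma~\ref{lem:game2}(b) for the coordinate $s$ and Lemma~\ref{lem:game3} for each coordinate $t \notin \{r,s\}$, in the decreasing direction, to zero out every coordinate other than $r$. Since none of these steps touches the $r$-coordinate, the precondition $x_r > 0$ is preserved throughout, and I obtain $\Delta_s c_r(\vec x) = \Delta_s c_r(x_r \vec 1_r)$. Then I would apply Lemma~\ref{lem:game2}(a) in the decreasing direction to reduce the $r$-coordinate from $x_r$ down to $1$; here every intermediate lower endpoint still has $r$-load at least $1$, so the precondition remains valid, and I reach $\Delta_s c_r(\vec x) = \Delta_s c_r(\vec 1_r)$. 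As the right-hand side is a constant independent of $\vec x$, the same identity holds for any $\vec y$ with $y_r > 0$, and the claimed equality $\Delta_s c_r(\vec x) = \Delta_s c_r(\vec y)$ follows by transitivity.

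The only real care needed is bookkeeping around the precondition $x_r > 0$: every invariance holds only when the \emph{lower} of the two compared points has positive $r$-load, so the order of normalization matters. Eliminating the non-$r$ coordinates first, which never changes $x_r$, and descending in the $r$-coordinate only down to $1$ and never to $0$, keeps the argument inside the region $\{\vec z : z_r \geq 1\}$ where all three lemmas apply. I would also note the degenerate case $m = 2$: there is no third resource $t$, so Lemma~\ref{lem:game3} is neither invoked nor needed, since normalizing the single non-$r$ coordinate $s$ via Lemma~\ref{lem:game2}(b) together with Lemma~\ref{lem:game2}(a) already suffices. I expect the hardest part to be merely presenting the iterated descent cleanly—formally an induction on $\sum_{t \neq r} x_t$ followed by an induction on $x_r - 1$—rather than any conceptual difficulty, since all the structural work has already been carried out in the preceding lemmas.
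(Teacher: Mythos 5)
Your proposal is correct and follows essentially the same route as the paper: both arguments read Lemmas~\ref{lem:game2} and~\ref{lem:game3} as saying that a unit step in any coordinate leaves $c_r(\cdot+\vec 1_s)-c_r(\cdot)$ unchanged (subject to the lower endpoint having positive $r$-load) and then chain these steps, the only cosmetic difference being that the paper inducts on $\sum_{r'}|x_{r'}-y_{r'}|$ to connect $\vec x$ to $\vec y$ directly, while you normalize both to the canonical point $\vec 1_r$. Your bookkeeping of the precondition $x_r>0$ matches the paper's observation that $x_r'\geq y_r>0$ along the descent.
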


\begin{proof}
Let $\vec c \in \C$, $\vec c : \Z_{\geq 0}^m \to \R^m$, $r, s \in [m]$ with $r \neq s$, and $\vec x, y \in \Z^m_{\geq 0}$ with $x_r, y_r > 0$ be arbitrary.
  We show the lemma by induction on $k = \sum_{r' \in R} |x_{r'} - y_{r'}|$. If $k = 0$, then $x = y$ and the claim is trivially fulfilled. Thus assume $k > 0$. Without loss of generality, there is $t \in R$ with $x_{t} > y_{t}$.
  Let $\vec x' := \vec x - \vec 1_{t}$. Note that $x'_r \geq y_r > 0$, so that Lemma~\ref{lem:game2} and Lemma~\ref{lem:game3} can be applied to $\vec x'$. We distinguish three cases for $t$.
  
\paragraph{Case $t = r$:} In this case, $\vec x = \vec x' + \vec 1_r$. We obtain
  \begin{align*}
    c_r(\vec x + \vec 1_{s}) - c_r(\vec x) \; = \; c_r(\vec x' + \vec 1_{\{r,s\}}) - c_r(\vec x' + \vec 1_{r}) \; = \;  c_r(\vec x' + \vec 1_{s}) - c_r(\vec x').
  \end{align*}
  where the second equality follows from Lemma~\ref{lem:game2}.
 
\paragraph{Case $t = s$:} In this case, $\vec x = \vec x' + \vec 1_s$. We obtain
  \begin{align*}
    c_r(\vec x + \vec 1_{s}) - c_r(\vec x) \; = \; c_r(\vec x' + 2 \cdot \vec 1_{s}) - c_r(\vec x' + \vec 1_{s}) \; = \;  c_r(\vec x' + \vec 1_{s}) - c_r(\vec x').
  \end{align*}
  where the second equality follows from Lemma~\ref{lem:game2}.
 
\paragraph{Case $t \in R \setminus \{r, s\}$:} In this case, $\vec x = \vec x' + \vec 1_t$. We obtain
  \begin{align*}
    c_r(\vec x + \vec 1_{s}) - c_r(\vec x) \; = \; c_r(\vec x' + \vec 1_{\{s, t\}}) - c_r(\vec x' + \vec 1_{t}) \; = \;  c_r(\vec x' + \vec 1_{s}) - c_r(\vec x').
  \end{align*}
  where the second equality follows from Lemma~\ref{lem:game3}.
  
  In either case, $c_r(\vec x + \vec 1_{s}) - c_r(\vec x) = c_r(\vec x' + \vec 1_{s}) - c_r(\vec x')$, which is equal to $c_r(\vec y + \vec 1_{s}) - c_r(\vec y)$ by the induction hypothesis because  $\sum_{r' \in R} |x'_{r'} - y_{r'}| < k$.
\end{proof}

Combining Lemmas~\ref{lem:symmetric-jacobi-unweighted} and~\ref{lem:diagonal-hesse-unweighted}, we observe that the interaction effects of distinct resources in the cost function $\vec c$ must be linear and symmetric and thus condition \eqref{eq:cost-function-unweighted} is indeed necessary for consistency. This is formalized in the following lemma, which completes the proof of Theorem~\ref{thm:consistency_unweighted}.

\begin{restatable}{lemma}{restateLemNecessaryUnweighted}
\label{lem:necessary_unweighted}
Let $\C$ be closed under composition and consistent for unweighted resource graph games. Then, for all $\vec c \in \C$, $\vec c : \Z_{\geq 0}^m \to \R^m$ with $m \in \N$ there are $m$ functions $f_1, \dots, f_m : \Z_{\geq 0} \to \R$ and a symmetric matrix $\vec A \in \R^{m \times m}$ such that
$\vec c(\vec x) = (f_1(x_1), \dots, f_m(x_m))^\top + \vec A_{r,\cdot} \, \vec x$
for all $\vec x \in \Z^R_{\geq 0}$.
\end{restatable}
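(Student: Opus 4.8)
The two preceding lemmas already isolate exactly the two ingredients of the claimed form, so the plan is simply to assemble them. By Lemma~\ref{lem:diagonal-hesse-unweighted}, for every ordered pair $r \neq s$ the discrete marginal $c_r(\vec x + \vec 1_s) - c_r(\vec x)$ takes one and the same value for \emph{all} $\vec x \in \Z_{\geq 0}^m$ with $x_r > 0$; I would fix this common value and call it $a_{r,s}$. Setting $A_{r,s} := a_{r,s}$ for $r \neq s$ and $A_{r,r} := 0$ produces a candidate matrix $\vec A$ (the arbitrary self-dependence of $c_r$ on $x_r$ being deliberately pushed into the separable part), and defining $f_r(k) := c_r(k\,\vec 1_r)$ for $k \geq 1$ produces the candidate functions. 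Two tasks then remain: (i) to show that $\vec A$ is symmetric, and (ii) to verify that $c_r(\vec x) = f_r(x_r) + (\vec A \vec x)_r$.

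For symmetry I would feed Lemma~\ref{lem:symmetric-jacobi-unweighted} into the constancy provided by Lemma~\ref{lem:diagonal-hesse-unweighted}. Rewriting the left-hand side of the equation in Lemma~\ref{lem:symmetric-jacobi-unweighted} as $c_r\bigl((\vec x + \vec 1_r) + \vec 1_s\bigr) - c_r(\vec x + \vec 1_r)$ and observing that the $r$-th coordinate of $\vec x + \vec 1_r$ is positive, this difference equals $a_{r,s}$; symmetrically, the right-hand side equals $a_{s,r}$. Hence $a_{r,s} = a_{s,r}$ and $\vec A$ is symmetric. To recover $c_r$ itself I would telescope: starting from any $\vec x$ with $x_r > 0$, reduce each coordinate $s \neq r$ to zero one unit at a time. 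Since the $r$-th coordinate is never touched and thus stays positive throughout, each unit reduction in a coordinate $s$ decreases $c_r$ by exactly $a_{r,s}$ by Lemma~\ref{lem:diagonal-hesse-unweighted}; summing over all reductions gives $c_r(\vec x) = c_r(x_r\,\vec 1_r) + \sum_{s \neq r} a_{r,s}\, x_s = f_r(x_r) + (\vec A \vec x)_r$, using $A_{r,r}=0$.

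The one genuinely delicate point is the case $x_r = 0$, which none of the earlier lemmas cover, as they all require $x_r > 0$; this is where I expect the main obstacle to lie, though it is conceptual rather than computational. I would argue that the values $c_r(\vec x)$ with $x_r = 0$ are irrelevant to every resource graph game built from $\vec c$, because $c_r(\vec x)$ enters a player's private cost only through the term $x_{i,r}\, c_r(\vec x)$, and $x_{i,r} = 1$ forces $x_r \geq 1$. Consequently these values affect neither any player's incentives nor the existence of equilibria, so consistency imposes no constraint on them, and we may without loss of generality set them equal to $f_r(0) + (\vec A \vec x)_r$ for an arbitrary choice of $f_r(0)$. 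With this convention the identity $\vec c(\vec x) = \bigl(f_1(x_1),\dots,f_m(x_m)\bigr)^\top + \vec A \vec x$ holds on all of $\Z_{\geq 0}^m$, which completes the lemma and hence the necessity direction of Theorem~\ref{thm:consistency_unweighted}.
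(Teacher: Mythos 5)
Your argument is correct and coincides with the paper's own proof: the paper defines $f_r(x) = c_r(x\,\vec 1_r)$ and $a_{r,s} = c_r(\vec 1_{\{r,s\}}) - c_r(\vec 1_{r})$ (which by Lemma~\ref{lem:diagonal-hesse-unweighted} is exactly your constant $a_{r,s}$), obtains symmetry of $\vec A$ from Lemma~\ref{lem:symmetric-jacobi-unweighted} just as you do, and recovers $c_r(\vec x) = f_r(x_r) + \vec A_{r,\cdot}\,\vec x$ by the same telescoping, phrased as an induction on $\sum_{s \neq r} x_s$. Your concern about $x_r = 0$ is well placed but applies equally to the paper's proof, whose inductive step also invokes Lemma~\ref{lem:diagonal-hesse-unweighted} at $\vec x - \vec 1_s$ and hence implicitly needs $x_r > 0$; since consistency genuinely cannot constrain $c_r(\vec x)$ at points with $x_r = 0$ (those values never enter any player's cost), your reading --- that the claimed form holds after modifying $\vec c$ on this game-irrelevant set --- is the correct way to interpret, and if one is pedantic the only way to salvage, the statement.
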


\begin{proof}
  For $r \in R$ define $f_r : \R_{\geq 0} \to \R$ by $f_r(x) = c_r(x \cdot \vec 1_r)$ for all $x \in \R_{\geq 0}$.
  The matrix $\vec A \in \R^{m \times m} = (a_{r,s})_{r,s \in [m]}$ is defined as
\begin{align*}
a_{r,s} =  
\begin{cases}
c_r(\vec 1_{\{r, s\}}) - c_r(\vec 1_{r}) & \text{ if $r\neq s$ },\\
0 & \text{ otherwise }.
\end{cases}
\end{align*}
Note that $a_{r,s} = a_{s,r}$ and, hence $\vec A$ is symmetric by Lemma~\ref{lem:symmetric-jacobi-unweighted}.
  
Let $\vec x \in \R_{\geq 0}^m$ and $r \in [m]$ be arbitrary. 
For $r \in [m]$, we denote by $\vec A_{r,\cdot}$ its $r$-th row. 
We show that $c_r(\vec x) = f_r(x_r) + A_{r,\cdot} \vec x$ by induction on $k = \sum_{s \in R \setminus \{r\}} x_s$. For $k = 0$ the claim is true by definition of $f_r$. Thus assume $k > 0$ and let $s \in R \setminus \{r\}$ with $x_s > 0$. We obtain
  \begin{align*}
    c_r(\vec x) & = c_r(\vec x - \vec 1_s) + c_r(\vec 1_{\{r, s\}}) - c_r(\vec 1_{r})\\
    & = f_r(x_r) + \vec A_{r,\cdot} (\vec x - \vec 1_s) + a_{r,s}\\
    & = f_r(x_r) + \vec A_{r,\cdot} \vec x
  \end{align*}
  where the first identity follows from Lemma~\ref{lem:diagonal-hesse-unweighted}
  and the second identity follows from the induction hypothesis and the definition of $a_{r,s}$.
\end{proof}

\section{Resource graph games with weighted players}

In this section, we establish necessary and sufficient conditions for consistency when each player $i \in N$ imposes a weight $w_i \in \R_{\geq 0}$ on the resources in their strategy. The characterization reveals two possible cases: A consistent set of cost functions either contains only affine functions with a symmetric Jacobian, or the cost functions of individual resources are exponential and separable (i.e., there is no interaction among distinct resources).

\begin{theorem}
\label{thm:consistency_weighted}
Let $\C$ be a set of continuous cost functions that is closed under composition. Then $\C$ is consistent for weighted resource graph games if and only if one of the following two statements is fulfilled:
\begin{enumerate}
  \item\label{enum:weighted1} For each $\vec c \in \C$ with $\vec c : \R_{\geq 0}^m \to \R^m$ for some $m \in \N$ there is a symmetric matrix $\vec A \in \R^{m \times m}$ and a vector $\vec b \in \R^m$ such that 
  $\vec c(\vec x) = \vec A \vec x + \vec b$.
  \item\label{enum:weighted2} There is $\phi \in \R$ such that for all $\vec c \in \C$ with $\vec c : \R_{\geq 0}^{m} \to \R^m$ there are $\vec a, \vec b \in \R^m$ such that $c_r(\vec x) = a_r \exp(\phi x_r) + b_r$ for all $r \in [m]$ and all $\vec x \in \R^m_{\geq 0}$.
\end{enumerate}
\end{theorem}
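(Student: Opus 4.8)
The plan is to establish the two implications separately, building on the $(A,B)$-symmetric game technique of Lemma~\ref{lem:symmetric-game} but now with players carrying genuine weights and with loads ranging over $\R_{\geq 0}$.

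For sufficiency I would exhibit a potential in each case. In the affine-symmetric case $\vec c(\vec x) = \vec A\vec x + \vec b$ the function $P(x) = \tfrac12\vec x^\top\vec A\vec x + \tfrac12\sum_{i\in N}\vec x_i^\top\vec A\vec x_i + \vec b^\top\vec x$ is an exact potential; this is the specialization of the potential in the proof of Lemma~\ref{lem:consistency_unweighted_sufficient} to the case without a separable nonlinear term, and since its verification uses only the symmetry of $\vec A$ and not the integrality of the loads, it carries over verbatim to weighted, real-valued loads. In the separable exponential case the cost of each resource depends only on the aggregated weight on that resource, so the game is literally a weighted congestion game with resource costs $c_r(x_r) = a_r\exp(\phi x_r)+b_r$ (the positive scalar $w_i$ in each private cost is irrelevant for best responses); existence of a pure Nash equilibrium then follows from the known weighted potential for exponential weighted congestion games (Harks and Klimm~\cite{HarksK12}, Panagopoulou and Spirakis~\cite{Panagopoulou06}).

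For necessity I would first transfer the cross-interaction analysis. Replacing the unit increments of Lemmas~\ref{lem:symmetric-jacobi-unweighted}--\ref{lem:game3} by increments of a freely chosen common weight $w>0$ of the two active players, and using dummy players to install an arbitrary real background load $\vec x\in\R^m_{\geq 0}$ on all four copies, the same $(A,B)$-symmetric constructions force, for every $w>0$ and every background, that the marginal effect of the load on a resource $s$ on the cost $c_r$ of a distinct resource $r$ is symmetric in $r,s$ and independent of both the background and the current level. Together with continuity this yields that each consistent $\vec c$ has the form $\vec c(\vec x) = (g_1(x_1),\dots,g_m(x_m))^\top + \vec A\vec x$ with $\vec A$ symmetric (zero diagonal, absorbing diagonal terms into the $g_r$), where the self-effects $g_r$ are so far unconstrained. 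It then remains to classify the $g_r$ and to establish the global dichotomy, and here the argument splits. If every $\vec c\in\C$ is separable ($\vec A=\vec 0$), then every induced game is a weighted congestion game, and consistency of the composition-closed family $\C$ forces, by the characterization of Harks and Klimm~\cite{HarksK12}, that all $g_r$ are affine (case~\ref{enum:weighted1}) or all are exponential with one common exponent $\phi$ (case~\ref{enum:weighted2}). If instead some $\vec c_0\in\C$ is non-separable, say $A_{rs}\neq 0$, then a coupling argument shows every self-effect must be affine: one constructs a weighted instance on $\vec c_0$ in which a single player's deviation simultaneously changes the load on $r$ (interacting linearly with $s$) and on a resource $u$ whose self-effect is probed; the symmetric linear interaction contributes to an \emph{exact} potential whereas a nonlinear self-term admits only a \emph{weighted} potential with a $w_i$-dependent integrating factor, and the clash produces an improving cycle unless $g_u$ is affine. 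Hence $\vec c_0$ is affine, and applying the same argument to $\vec c_0\oplus\vec d\in\C$ for every other $\vec d\in\C$, which inherits the interaction of $\vec c_0$ and is therefore non-separable, shows that $\vec d$ is affine as well, so that all of $\C$ satisfies case~\ref{enum:weighted1}.

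I expect the self-effect classification together with this dichotomy to be the main obstacle. In contrast to the unweighted setting, where the separable functions $f_r$ are completely arbitrary, here the self-effects are pinned down solely through deviations of differently weighted players, so the governing functional equation is of Pexider type and its reduction to exactly the affine and exponential solutions relies essentially on the continuity hypothesis. The most delicate point is the coupling step: making rigorous that any nonzero linear interaction is incompatible with a genuinely exponential self-term, even on resources not themselves involved in the interaction, requires exhibiting a concrete small weighted game and an explicit improving cycle, and it is this incompatibility, propagated through closure under composition, that simultaneously rules out mixing the two regimes and enforces the single common exponent $\phi$ in case~\ref{enum:weighted2}.
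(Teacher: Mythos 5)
Your overall architecture matches the paper's: sufficiency via the exact potential $\tfrac12\vec x^\top\vec A\vec x+\tfrac12\sum_i\vec x_i^\top\vec A\vec x_i+\vec b^\top\vec x$ in the affine case and via the reduction to exponential weighted congestion games in the separable case, and necessity via weighted variants of the $(A,B)$-symmetric constructions (this is exactly Lemma~\ref{lem:structure-weighted-players}, obtained by giving the two active players weight $\varepsilon$ and rescaling the dummies) followed by a Cauchy-equation argument that makes each cross-effect $\lambda\mapsto c_r(\vec z+\lambda\vec 1_s)-c_r(\vec z)$ linear with symmetric coefficients $a_{r,s}=a_{s,r}$. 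Up to that point your plan is sound.

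The genuine gap is the step you yourself flag as "the most delicate point": showing that a nonzero cross-coefficient $a_{r,s}\neq 0$ is incompatible with non-affine self-effects $g_r$, and that this incompatibility enforces the global dichotomy. You defer this to an unspecified "coupling argument" producing "an explicit improving cycle," with a heuristic about exact versus weighted potentials; as written this is not a proof, and the heuristic does not even apply until you already know each $g_u$ is affine or exponential, which your plan only establishes in the fully separable branch. The paper closes this gap with a different device (Lemma~\ref{lem:weighted-congestion-separable}): for \emph{every} subset $S\subseteq R$ and background $\vec z$, the bundled diagonal restriction $c_{S,\vec z}(\lambda)=\sum_{r\in S}c_r(\vec z+\lambda\vec 1_S)$ is shown to be a consistent cost function for ordinary separable weighted congestion games, via an embedding in which each abstract resource is replaced by a copy of the bundle $S$ padded with dummy players of weight $z_{j,r}$. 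The Harks--Klimm characterization then forces all these restrictions, including the two-element bundles, to be simultaneously affine or simultaneously exponential with one common $\phi$. Writing $c_r(\vec z+\lambda\vec 1_{\{r,s\}})+c_s(\vec z+\lambda\vec 1_{\{r,s\}})=c_r(\vec z+\lambda\vec 1_r)+c_s(\vec z+\lambda\vec 1_s)+2\bigl(c_r(\vec z+\lambda\vec 1_{\{r,s\}})-c_r(\vec z+\lambda\vec 1_r)\bigr)$ then exhibits the cross-effect as a combination of functions of known type; since by Lemma~\ref{lem:structure-weighted-players}(b) that cross-effect is also additive in $\lambda$, it must be linear through the origin in the affine regime and identically zero in the exponential regime (Lemma~\ref{lem:weighted-linear-effect}). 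This bundling construction is the missing ingredient in your proposal; without it, or without a fully worked-out improving-cycle construction replacing it, the necessity direction is incomplete.
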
 

The two distinct cases arise due to the fact that \emph{weighted congestion games} are a special case of weighted resource graph games, namely where the cost of each resource~$r$ depends on the load of $r$ only, i.e., $B_r = \{r\}$ for all $r \in R$.
For these games Harks and Klimm~\cite{HarksK12} provided a characterization that shows that consistent sets contain only affine or only exponential cost functions. 
In the following, we give prove the sufficiency and necessity of either of the two conditions.

\subsection{Proof of Theorem~\ref{thm:consistency_weighted}: \ref{enum:weighted1}. or \ref{enum:weighted2}. $\Rightarrow$ Consistency of $\C$:}

We show sufficiency of conditions~1 or 2 in Theorem~\ref{thm:consistency_weighted}, respectively, for the consistency of $\C$.
If condition~2\ is fulfilled, then any weighted resource graph game with cost function $\vec c \in \C$ is a weighted congestion game with exponential costs. For these games, the existence of pure Nash equilibria has been established in \cite{HarksK12}, Theorem~5.1. It is therefore sufficient to show that condition 1 of Theorem~\ref{thm:consistency_weighted} is also sufficient for consistency. The following lemma establishes the sufficiency of condition 1, following the same lines as the proof of Lemma~\ref{lem:consistency_unweighted_sufficient}.

\begin{restatable}{lemma}{restateLemWeightedSufficient}
\label{lem:consistency_weighted-sufficient}
Let $G$ be a weighted resource graph game on $m$ resources with cost function $\vec c : \R_{\geq 0}^m \to \R^m$ given by $\vec c(\vec x) = \vec A \vec x + \vec b$, where $\vec A \in \R^{m \times m}$ is a symmetric matrix and $\vec b \in \R^m$ is  a vector. Then $G$ has a pure Nash equilibrium.
\end{restatable}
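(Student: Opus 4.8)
The plan is to mimic the proof of Lemma~\ref{lem:consistency_unweighted_sufficient} by exhibiting an exact potential function for the weighted game, now accounting for the fact that each player~$i$ contributes a scaled indicator vector $w_i \vec y_i$ (with $\vec y_i \in Y_i \subseteq \{0,1\}^m$) rather than a $0/1$-vector. As before, I would add the players one at a time in the order $1,\dots,n$, write $\vec x_{\leq i} = \sum_{j \leq i} w_j \vec y_j$ for the partial load vector, and define a candidate potential by summing the private costs accrued upon each player's arrival. With the cost function $\vec c(\vec x) = \vec A \vec x + \vec b$, player~$i$'s private cost upon entering is $\vec x_i^\top\bigl(\vec A \vec x_{\leq i} + \vec b\bigr)$ where $\vec x_i = w_i \vec y_i$, so the candidate potential is
\begin{align*}
P(x) = \sum_{i \in N} \vec x_i^\top \bigl(\vec A \vec x_{\leq i} + \vec b\bigr).
\end{align*}

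Next I would simplify this expression exactly as in the unweighted case. The linear term splits off cleanly as $\sum_{i} \vec x_i^\top \vec b = \vec x^\top \vec b$, where $\vec x = \sum_i \vec x_i$ is the total load. For the quadratic term, the same rearrangement used for the symmetric matrix $\vec A$ gives
\begin{align*}
\sum_{i \in N} \vec x_i^\top \vec A \vec x_{\leq i} = \tfrac12 \vec x^\top \vec A \vec x + \tfrac12 \sum_{i \in N} \vec x_i^\top \vec A \vec x_i,
\end{align*}
relying crucially on the symmetry of $\vec A$ to pair the cross terms $\vec x_i^\top \vec A \vec x_j$ and $\vec x_j^\top \vec A \vec x_i$. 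Since $\vec x_i^\top \vec A \vec x_i = w_i^2\, \vec y_i^\top \vec A \vec y_i$ depends only on player~$i$'s own strategy and weight, the full potential $P(x) = \vec x^\top \vec b + \tfrac12 \vec x^\top \vec A \vec x + \tfrac12 \sum_i \vec x_i^\top \vec A \vec x_i$ manifestly decomposes into a term depending only on the aggregate load and a sum of per-player terms.

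Finally I would verify that $P$ is an exact potential. Fixing a deviation by player~$i$ from $\vec x_i$ to $\vec y_i' = w_i \vec z_i$, the aggregate-load term and the single own-term indexed by $i$ are the only parts of $P$ that change, while every other player's own-term is untouched; the reordering invariance of $P$ (itself a consequence of the symmetric decomposition) lets me assume the deviating player is added last, so that the change in $P$ equals precisely $\pi_i(\vec y_i', x_{-i}) - \pi_i(x)$. The one point requiring a little care relative to the unweighted argument is confirming that the per-player own-terms carry the weight factors $w_i^2$ correctly and that the cross-term cancellation still goes through verbatim under scaling; but since the weights enter only as fixed scalars multiplying the indicator vectors, symmetry of $\vec A$ is all that is needed and no genuinely new obstacle arises. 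Having established an exact potential, existence of a pure Nash equilibrium follows because $X$ is finite.
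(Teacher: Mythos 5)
Your proposal is correct and follows essentially the same route as the paper: the order-of-arrival potential $P(x)=\sum_i \vec x_i^\top(\vec A\vec x_{\leq i}+\vec b)$, the symmetric rearrangement $\sum_i \vec x_i^\top \vec A\vec x_{\leq i}=\tfrac12\vec x^\top\vec A\vec x+\tfrac12\sum_i\vec x_i^\top\vec A\vec x_i$ showing order-invariance, and the observation that a unilateral deviation (taken w.l.o.g.\ by the last player) changes $P$ by exactly the deviator's cost change. Nothing differs in substance from the paper's argument.
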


\begin{proof}
Fix an arbitrary weighted resource graph game $G$ whose cost is determined by $\vec c \in \C$ and an arbitrary strategy profile $x \in X$. As in the proof of Theorem~\ref{thm:consistency_unweighted}, let $P(x)$ be the sum of the private costs of the players when adding them to the game in order $1,\dots,n$. We again write $\vec x_{\leq i} = \sum_{j \in N : j \leq i} \vec x_i$ for the load vector of the players up to $i$. 
We then obtain
\begin{align*}
P(x) = \sum_{i \in N} \vec x_i^\top \Bigl[ \vec A \vec x_{\leq i} + \vec b\Bigr].
\end{align*}
Similarly to the proof of Theorem~\ref{thm:consistency_unweighted}, we calculate
\begin{align*}
\sum_{i \in N} \vec x_i^\top \vec A \vec x_{\leq i} &= \sum_{i \in N} \vec x^\top_i \vec A \Bigl(\sum_{j \in N : j \leq i} \vec x_j\Bigr) \\
&= \frac{1}{2} \sum_{i \in N} \sum_{j \in N} \vec x_i^\top \vec A \vec x_j + \frac{1}{2} \sum_{i \in N} \vec x_i^\top \vec A \vec x_i \\
&= \frac{1}{2} \vec x^\top \vec A \vec x + \frac{1}{2} \sum_{i \in N} \vec x^\top_i \vec A \vec x_i, 	
\end{align*}
as in the unweighted case where we again used the symmetry of $\vec A$. We obtain
\begin{align*}
P(x) &=  \frac{1}{2} \vec x^\top \vec A \vec x + \frac{1}{2} \sum_{i \in N} \vec x_i^\top \Bigl[\vec A \vec x_i + \vec b\Bigr].
\end{align*}
This shows that $P$ is independent of the ordering of the player and the remainder of the proof is equivalent to Theorem~\ref{thm:consistency_unweighted}.
\end{proof}

\subsection{Proof of Theorem~\ref{thm:consistency_weighted}: Consistency of $\C$ $\Rightarrow$ \ref{enum:weighted1}.  or \ref{enum:weighted2}.:}
We now prove the necessity of the conditions given in Theorem~\ref{thm:consistency_weighted}.
By a slight adaptation of the constructions in Section~\ref{sec:unweighted}, we obtain the following stronger version of Lemmas~\ref{lem:symmetric-jacobi-unweighted} and~\ref{lem:diagonal-hesse-unweighted} for cost functions that are consistent for weighted players.

\begin{restatable}{lemma}{restateLemStructureWeighted}
\label{lem:structure-weighted-players}
Let $\C$ be closed under composition and consistent for weighted resource graph games. Then, for all $\vec c \in \C$, $\vec c : \R_{\geq 0}^m \to \R^m$ with $m \in \N$ the following functional equations are satisfied:
  \begin{itemize}  
    \item[(a)] $c_r(\vec x + \varepsilon \cdot \vec 1_{\{r,s\}}) - c_r(\vec x + \varepsilon \cdot \vec 1_{r}) = c_s(\vec x + \varepsilon \cdot \vec 1_{\{r,s\}}) - c_f(\vec x + \varepsilon \cdot \vec 1_{s})$ for all $r, s \in R$ and all $\vec x \in \R_{\geq 0}^m$ and all $\varepsilon > 0$ and
    \item[(b)] $c_r(\vec x + \varepsilon \cdot \vec 1_{s}) - c_r(\vec x) = c_r(\vec y + \varepsilon \cdot \vec 1_{s}) - c_r(\vec y)$ for all $r, s \in R$ with $r \neq s$ and all $\vec x, \vec y \in \R_{\geq 0}^m$ with $x_r, y_r > 0$ and all $\varepsilon > 0$.
  \end{itemize}
\end{restatable}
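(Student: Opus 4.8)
The plan is to transfer the three game constructions of Section~\ref{sec:unweighted} to weighted players, the only substantive change being how a prescribed background load is realized. Where the unweighted proofs placed an integral load $x_t$ on (each copy of) resource~$t$ by means of $x_t$ unit-weight dummy players, we now use a single dummy player of \emph{real} weight $x_t$ with unique strategy $\sum_{k \in [4]} \vec 1_{t_k}$; this realizes an arbitrary background $\vec x \in \R^m_{\geq 0}$. The two active players $1$ and $2$ are given a common weight $\varepsilon > 0$ in place of weight~$1$, so that each of their strategies contributes load $\varepsilon$ rather than $1$ to the relevant resource copies. With these substitutions the games stay $(A,B)$-symmetric for players $1$ and $2$, and consistency together with Lemma~\ref{lem:symmetric-game} again forces $A = B$. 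Applied to the game of Lemma~\ref{lem:symmetric-jacobi-unweighted} (Figure~\ref{fig:games1-2}(i)), this yields $A = c_r(\vec x + \varepsilon \vec 1_{\{r,s\}}) + c_s(\vec x + \varepsilon \vec 1_s)$ and $B = c_r(\vec x + \varepsilon \vec 1_r) + c_s(\vec x + \varepsilon \vec 1_{\{r,s\}})$, which is precisely statement~(a).

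For statement~(b) I would first derive the weighted analogs of Lemmas~\ref{lem:game2} and~\ref{lem:game3} in the same way, now invoking statement~(a) wherever the unweighted arguments used Lemma~\ref{lem:symmetric-jacobi-unweighted}. Writing $D^\varepsilon_u g(\vec x) := g(\vec x + \varepsilon \vec 1_u) - g(\vec x)$ for the forward difference with step $\varepsilon$, these constructions establish that every mixed second difference of $c_r$ vanishes,
\[
  D^\varepsilon_u D^\varepsilon_s c_r(\vec x) = 0 \qquad \text{for all } u \in [m],\ \varepsilon > 0,\ \vec x \in \R_{\geq 0}^m \text{ with } x_r > 0,
\]
where the case $u = r$ comes from the analog of Lemma~\ref{lem:game2}(a), the case $u = s$ from the analog of Lemma~\ref{lem:game2}(b), and the case $u \notin \{r,s\}$ from the analog of Lemma~\ref{lem:game3}.

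The decisive step is then to upgrade this $\varepsilon$-grid invariance to full independence of the base point. Here—unlike the unweighted case, where an induction over the integer lattice (Lemma~\ref{lem:diagonal-hesse-unweighted}) sufficed—I would exploit that the identities hold for \emph{every} $\varepsilon > 0$ simultaneously and use continuity of $\vec c$. Fixing all coordinates other than $s$, the map $\theta \mapsto c_r(\vec x + \theta \vec 1_s)$ is continuous and, by the case $u = s$, has vanishing second difference for all step sizes; this is Jensen's functional equation, whose continuous solutions are affine. Hence $c_r(\vec x + \varepsilon \vec 1_s) - c_r(\vec x) = \varepsilon\, \alpha_s(\vec x_{-s})$ for a slope $\alpha_s$ that may depend only on the coordinates other than~$s$. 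Feeding this back into the cases $u \neq s$ gives $\alpha_s(\vec x_{-s} + \varepsilon \vec 1_u) = \alpha_s(\vec x_{-s})$ for all $\varepsilon > 0$; since this holds for every $\varepsilon$, it states outright that $\alpha_s$ does not depend on coordinate~$u$, for each $u \neq s$. Therefore $\alpha_s$ is a constant, so $c_r(\vec x + \varepsilon \vec 1_s) - c_r(\vec x) = \varepsilon\, \alpha_s$ is independent of $\vec x$, which is statement~(b).

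The main obstacle is exactly this continuous passage: the game constructions only ever compare loads differing by the single step size $\varepsilon$ that the two active players carry, so on their own they yield invariance along the $\varepsilon$-grid and nothing finer. The resolution is to run the construction for all $\varepsilon$ at once—the case $u = s$ converts ``all step sizes'' into affineness along $\vec 1_s$ (via Jensen and continuity), after which the case $u \neq s$ converts ``for all $\varepsilon$'' directly into coordinatewise independence of the slope. A minor technical point to verify along the way is that the dummy weights realizing the background remain nonnegative (some constructions place background $x_r - \varepsilon$ on resource~$r$), which restricts to $\varepsilon \le x_r$; this is harmless, since the Jensen argument only needs the second-difference identities for arbitrarily small $\varepsilon$, and the affine slope it produces is the same for all $\varepsilon > 0$.
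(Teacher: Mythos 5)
Your proposal is correct, and for the game constructions and part~(a) it coincides with the paper's (very terse) proof sketch: set $w_1 = w_2 = \varepsilon$, realize the background load $\vec x$ through dummy players of real weight, and invoke $(A,B)$-symmetry and Lemma~\ref{lem:symmetric-game}. Where you genuinely depart from the paper is in part~(b). The paper's sketch claims that the construction of Lemma~\ref{lem:diagonal-hesse-unweighted} also carries over, but that lemma is proved by an induction over the integer lattice that decrements one coordinate per step; in the weighted setting the game identities only let you shift the base point by the active players' weight $\varepsilon$, so the induction connects $\vec x$ and $\vec y$ only when $\vec y - \vec x \in \varepsilon\,\Z^m$ and does not reach arbitrary real base points. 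You correctly identify this as the main obstacle and close it by quantifying over all $\varepsilon$ simultaneously: the $u=s$ identity is Jensen's functional equation along the $s$-axis, whose continuous solutions are affine, and the $u \neq s$ identities then kill the dependence of the resulting slope on every other coordinate (your handling of the restriction $\varepsilon \le x_r$ coming from nonnegativity of the dummy weights is also sound, since the local Jensen property plus continuity still forces global affineness and the shift identity can be iterated by doubling). The one caveat is that your argument uses continuity of $\vec c$, which Lemma~\ref{lem:structure-weighted-players} does not list as a hypothesis; however, continuity is assumed in Theorem~\ref{thm:consistency_weighted} and in every lemma that invokes this one (Lemmas~\ref{lem:weighted-congestion-separable}--\ref{lem:necessary_weighted}), so what you prove is the version of the statement that the paper actually needs, and arguably the version the authors should have stated. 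Net effect: your write-up is more complete than the paper's own proof on the one step where the unweighted argument does not transfer verbatim.
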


\begin{proof}[Proof (Sketch).]
  We follow the same constructions used to establish Lemmas~\ref{lem:symmetric-jacobi-unweighted}, \ref{lem:game2}, \ref{lem:game3}, and \ref{lem:diagonal-hesse-unweighted}. However, we set $w_1 = w_2 = \varepsilon$ and adjust the weights of the dummy players for each resource such that the load on the resource equals the corresponding coordinate of $\vec x$.
\end{proof}

The following lemma follows from the characterization of consistent functions for weighted congestion games with separable cost functions due to Harks and Klimm~\cite{HarksK12}.

\begin{restatable}{lemma}{restateLemWeightedCongestion}
\label{lem:weighted-congestion-separable}
Let $\C$ be a set of continuous functions that is closed under composition and consistent for weighted resource graph games. Then,  for all $\vec c \in \C$, $\vec c : \R_{\geq 0}^m \to \R^m$ with $m \in \N$ one of the following statements is true:
  \begin{itemize}
    \item[(a)] For all $S \subseteq R$ and all $\vec z \in \R^m_{\geq 0}$ there are $a_{S, \vec z}, b_{S, \vec z} \in \R$ such that $\sum_{r \in S} c_r(\vec z + \lambda \vec 1_{S}) = a_{S, \vec z} \lambda + b_{S, \vec z}$ for all $\lambda \geq 0$.
    \item[(b)] There is $\phi \in \R$ such that for all $S \subseteq R$ and all $\vec z \in \R^m_{\geq 0}$ there are $a_{S, \vec z}, b_{S, \vec z} \in \R$ such that $\sum_{r \in S} c_r(\vec z + \lambda \vec 1_{S}) = a_{S, \vec z} \exp(\phi \lambda) + b_{S, \vec z}$ for all $\lambda \geq 0$.
  \end{itemize}
\end{restatable}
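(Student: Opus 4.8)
The plan is to reduce the statement to the known characterization of consistent cost functions for weighted \emph{congestion} games due to Harks and Klimm~\cite{HarksK12}, which asserts that a composition-closed consistent set of one-dimensional cost functions consists either entirely of affine functions or entirely of exponential functions with a common exponent. The key observation is that for a fixed subset $S \subseteq R$ and a fixed offset $\vec z \in \R^m_{\geq 0}$, the quantity $g_{S,\vec z}(\lambda) := \sum_{r \in S} c_r(\vec z + \lambda \vec 1_S)$ behaves, as a function of $\lambda$, exactly like the cost function of a single resource in a weighted congestion game: increasing $\lambda$ corresponds to piling additional weight on precisely the resources in $S$, and $g_{S,\vec z}$ is the total cost that a player occupying all of $S$ would incur. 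The first thing I would do is make this correspondence precise by building, for each $\vec c \in \C$, an auxiliary one-dimensional cost function and verifying that the composition-closure of $\C$ (together with the dummy-player construction used throughout Section~\ref{sec:unweighted}) allows one to realize these $g_{S,\vec z}$ as resource cost functions in genuine weighted congestion games.

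Next I would invoke consistency: any weighted resource graph game built from $\vec c \in \C$ must possess a pure Nash equilibrium, and in particular this must hold for the singleton-style games in which each player chooses whether or not to occupy the block $S$ on top of the fixed background load $\vec z$ (the background load being realized by dummy players with appropriately tuned weights, exactly as in the sketch of Lemma~\ref{lem:structure-weighted-players}). The Harks--Klimm dichotomy then applies to the collection of functions $\{g_{S,\vec z}\}$ and forces each such function to be either affine in $\lambda$, giving alternative (a), or exponential in $\lambda$ with a common exponent $\phi$, giving alternative (b). The role of composition-closure here is essential: it guarantees that the family of realizable one-dimensional aggregate cost functions is itself closed under the disjoint-union operation that the Harks--Klimm theorem requires, so that the \emph{global} dichotomy (a single common exponent $\phi$ shared across all $\vec c \in \C$, all $S$, and all $\vec z$) is legitimate rather than merely a pointwise statement.

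The main obstacle I anticipate is the uniformity of the exponent $\phi$. It is not hard to argue that for each individual pair $(S,\vec z)$ the function $g_{S,\vec z}$ is affine or exponential; the delicate point is ruling out that different blocks $S$ or different offsets $\vec z$ (or different functions $\vec c$ in the set) could exhibit exponential behavior with \emph{different} exponents, or a mixture of affine and exponential behavior. This is precisely where I would lean on composition-closure: by forming disjoint unions of two instances with putatively different exponents $\phi_1 \neq \phi_2$ and invoking consistency of the composed game, one reconstructs the situation Harks and Klimm analyze, whose conclusion is exactly that a consistent composition-closed family cannot mix exponents or mix the two function types. I would therefore phrase the final step as a direct citation of their theorem applied to the composition-closed family $\{g_{S,\vec z} : \vec c \in \C,\ S \subseteq R,\ \vec z \in \R^m_{\geq 0}\}$, which yields the common exponent $\phi$ in case (b) and completes the proof.
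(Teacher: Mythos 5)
Your proposal is correct and follows essentially the same route as the paper: define the aggregate functions $c_{S,\vec z}(\lambda)=\sum_{r\in S}c_r(\vec z+\lambda\vec 1_S)$, show via dummy players and composition-closure that any weighted congestion game with these one-dimensional costs embeds isomorphically into a weighted resource graph game with cost $\vec c^k$, and then cite the Harks--Klimm dichotomy for weighted congestion games to obtain the affine/exponential alternative with a common exponent. The paper's proof simply makes your simulation step explicit (dummy player $(r,j)$ of weight $z_{j,r}$ pinned to the $j$-th copy, and normal strategies obtained by replacing each congestion-game resource $r_j$ with the block $S_j$), covering arbitrary strategy spaces rather than only the singleton-style games you emphasize.
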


\begin{proof}
Let $\vec c \in \C$, $\vec c : \R_{\geq 0}^m \to \R^m$ with $m \in \N$ be arbitrary.
For $S \subseteq R$ and $\vec z \in \R^m_{\geq 0}$ define $c_{S, \vec z} : \R_{\geq 0} \to \R$ by $c_{S, \vec z}(\lambda) = \sum_{r \in S} c_r(\vec z + \lambda \vec 1_{S})$. Let $\mathcal{C}' = \{c_{S, \vec z} : S \subseteq R, \vec z \in \R^m\}$ be the set of all functions arising in this way. We show that any weighted congestion game with separable cost functions on $k$ resources where each resource has a cost function $c' \in \mathcal{C}'$ is isomorphic to a weighted resource graph game with cost function $\vec c^{k}$ on $km$ resources. Since $\C$ is closed under composition, the function $\vec c^k$ is contained in $\C$, and, hence, consistency of $\C$ for weighted resource graph games implies the consistency of $\mathcal{C}'$ for weighted congestion games. It is known (\cite{HarksK12}, Theorem 5.1) that a set of continuous functions is consistent for weighted congestion games if and only if it contains only affine functions (as described in case (a) of the lemma) or it only contains only exponential functions (as described in case (b) of the lemma). Hence the lemma follows from the following construction.
    
  Consider any weighted congestion game $G'$ with arbitrary player set $N'$, weights $w'_i$ for each $i \in N'$, strategies $X'_i = \{w_i \cdot \vec x_i : \vec x_i \in Y_i\}$ with $Y_i \subseteq \{0,1\}^k$, and resource set $R' = \{r'_1,\dots,r'_k\}$ such that for all $r \in R'$ the cost function $c'_r : \R_{\geq 0} \to \R$ of resource $r$ is of the form $c'_r = c_{S, \vec z}$ for some $S \subseteq R$ and $\vec z \in \R^R_{\geq 0}$. 
In what follows, we construct an isomorphic weighted resource graph game $G$ with player set $N$, $mk$ resources, and cost function $\vec c^k$.
For $j \in [k]$ let $S_j \subseteq R$ and $\vec  z_j = (z_{j,1,},\dots,z_{j,m}) \in \R^m_{\geq 0}$ be such that $c'_{r_j} = c_{S_j, \vec z_j}$.
We define $N = N' \cup \{(r, j) : r \in R, j \in [k]\}$, i.e., the set of players $N$ of $G$ contains the player set $N'$ of the original congestion game plus $mk$ additional dummy players.
Each dummy player $(r, j)$ can only play strategy $\mathbf{1}_{r_j}$ where $r_j$ is the $j$-th copy of resource $r \in [m]$. That dummy player has a weight $w_{(r, j)} = z_{j,r}$.
Each normal player $i \in N'$ has the same weight $w_i = w'_i$ as in the original congestion game.
For each strategy $\vec x'_i \in X'_i$ of player~$i$ in the original congestion game, there is a strategy $\vec x_i \in X_i$ that arises from $\vec x'_i$ by replacing each resource $r_j \in R'$ by the set of resources $S_j \subseteq R$, i.e., $\vec X_i = \{ \sum_{j \in [k]} x_{i,j}' \, \mathbf{1}_{S_j} : \vec x_i' = (x'_{i,1},\dots,x'_{i,k}) \in X_i' \}$.
Thus, there is a one-to-one correspondence between strategy profiles $\vec x'$ for $G'$ and strategy profiles $\vec x$ of $G$ and it is easy to see that by construction, the private cost of player $i \in N'$ is the same for $\vec x'$ in $G'$ and the corresponding profile $\vec x$ in $G$.
\end{proof}

Equipped with Lemmas~\ref{lem:structure-weighted-players} and ~\ref{lem:weighted-congestion-separable}, we can show that the impact of the load of resource~$s$ on the cost of resource~$r$ needs to be linear and symmetric. In addition, the impact is non-existent if case (a) of  Lemma~\ref{lem:weighted-congestion-separable} does not hold.
This is formalized in the following lemma.

\begin{lemma}
\label{lem:weighted-linear-effect}
Let $\C$ be a set of continuous functions that is closed under composition and consistent for weighted resource graph games.
Then, for all $\vec c \in \C$, $\vec c : \R_{\geq 0}^m \to \R^m$ with $m  \in \N$  and all $r, s \in [m]$ with $r \neq s$ there is $a_{r,s} = a_{s, r}$ such that
$$c_r(\vec z + \lambda \vec 1_{s}) - c_r(\vec z) = a_{r,s} \lambda$$ for all $\vec z \in \R^m_{\geq 0}$ with $z_r > 0$ and all $\lambda \geq 0$. Moreover, if case (a) of Lemma~\ref{lem:weighted-congestion-separable} does not hold, then $a_{r,s} = 0$ for all $r, s \in R$.
\end{lemma}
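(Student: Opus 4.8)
The plan is to combine the two structural results already at our disposal. From Lemma~\ref{lem:structure-weighted-players}(b) we know that the ``mixed difference'' $c_r(\vec z + \varepsilon \vec 1_s) - c_r(\vec z)$ is translation-invariant: it does not depend on the base point $\vec z$ (as long as $z_r > 0$), only on $r$, $s$, and the step size $\varepsilon$. Define $g_{r,s}(\varepsilon) := c_r(\vec z + \varepsilon \vec 1_s) - c_r(\vec z)$, which is thus a well-defined function of $\varepsilon$ alone. The first task is to show $g_{r,s}$ is additive, i.e.\ $g_{r,s}(\varepsilon_1 + \varepsilon_2) = g_{r,s}(\varepsilon_1) + g_{r,s}(\varepsilon_2)$. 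This should follow by telescoping: write $c_r(\vec z + (\varepsilon_1+\varepsilon_2)\vec 1_s) - c_r(\vec z)$ as the sum of $c_r(\vec z + (\varepsilon_1+\varepsilon_2)\vec 1_s) - c_r(\vec z + \varepsilon_2 \vec 1_s)$ and $c_r(\vec z + \varepsilon_2 \vec 1_s) - c_r(\vec z)$, and then invoke translation-invariance on the first term (with base point $\vec z + \varepsilon_2 \vec 1_s$, which still has positive $r$-coordinate) to rewrite it as $g_{r,s}(\varepsilon_1)$.

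Once additivity is established, continuity of $\vec c$ (which is assumed in the hypothesis of Theorem~\ref{thm:consistency_weighted}) upgrades the Cauchy functional equation to linearity: $g_{r,s}(\varepsilon) = a_{r,s}\,\varepsilon$ for some constant $a_{r,s} \in \R$. This gives the claimed identity $c_r(\vec z + \lambda \vec 1_s) - c_r(\vec z) = a_{r,s}\lambda$. The symmetry $a_{r,s} = a_{s,r}$ then comes directly from Lemma~\ref{lem:structure-weighted-players}(a): that equation equates the two cross-differences $c_r(\vec x + \varepsilon \vec 1_{\{r,s\}}) - c_r(\vec x + \varepsilon \vec 1_r)$ and $c_s(\vec x + \varepsilon \vec 1_{\{r,s\}}) - c_s(\vec x + \varepsilon \vec 1_s)$, and using the translation-invariant linear form just derived, the left side is $a_{r,s}\varepsilon$ and the right side is $a_{s,r}\varepsilon$, forcing $a_{r,s} = a_{s,r}$.

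For the final ``moreover'' clause, suppose case (a) of Lemma~\ref{lem:weighted-congestion-separable} fails, so that case~(b) holds with some common exponent $\phi$, meaning every aggregated single-resource direction behaves exponentially rather than affinely. The strategy is to derive a contradiction from assuming some $a_{r,s} \neq 0$. Apply case~(b) to the two-element set $S = \{r,s\}$: along the diagonal direction $\vec 1_{\{r,s\}}$ the sum $c_r + c_s$ is exponential in $\lambda$. But the cross-term linearity just proved, combined with the diagonal ($B_r = \{r\}$) behavior coming from applying case~(b) to singletons $S = \{r\}$ and $S = \{s\}$, pins down $c_r$ and $c_s$ separately enough to compute $c_r(\vec z + \lambda \vec 1_{\{r,s\}}) + c_s(\vec z + \lambda \vec 1_{\{r,s\}})$ and compare it against the exponential prescribed by case~(b). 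A genuinely nonzero linear cross-influence $a_{r,s}$ contributes a nonvanishing term $2 a_{r,s}\lambda$ that cannot be absorbed into a single exponential $a_{\{r,s\},\vec z}\exp(\phi\lambda) + b_{\{r,s\},\vec z}$ unless $\phi = 0$ (the affine case, i.e.\ case~(a)), yielding the contradiction.

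The main obstacle I anticipate is the last clause, where one must carefully reconcile the two forms coming from Lemma~\ref{lem:weighted-congestion-separable}: the cross-difference is linear (additive in $\lambda$), yet the aggregated single-direction sum is exponential. The subtlety is bookkeeping — separating the diagonal self-effect of each resource (governed by case~(b) on singletons) from the off-diagonal cross-effect (governed by the linearity just proved) and then verifying that their sum along $\vec 1_{\{r,s\}}$ is incompatible with a pure exponential unless the linear cross-coefficient vanishes. The translation-invariance of the cross-terms is crucial here, since it lets us evaluate everything at a convenient base point and reduce to a one-variable comparison of an exponential-plus-linear against a pure exponential, where matching derivatives or a limiting argument forces $a_{r,s} = 0$.
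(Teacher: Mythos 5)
Your proof is correct, and for the main claim it takes a genuinely different route from the paper's. The paper writes $c_r(\vec z+\lambda\vec 1_{\{r,s\}})+c_s(\vec z+\lambda\vec 1_{\{r,s\}}) = c_r(\vec z+\lambda\vec 1_{r})+c_s(\vec z+\lambda\vec 1_{s})+2\bigl(c_r(\vec z+\lambda\vec 1_{\{r,s\}})-c_r(\vec z+\lambda\vec 1_{r})\bigr)$ using Lemma~\ref{lem:structure-weighted-players}(a), invokes Lemma~\ref{lem:weighted-congestion-separable} to conclude that the three aggregated terms (and hence the cross-difference) are all affine or all exponential in $\lambda$, and only then uses Lemma~\ref{lem:structure-weighted-players}(b) to kill the constant term (affine case) or the entire cross-difference (exponential case). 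You instead bypass Lemma~\ref{lem:weighted-congestion-separable} for this part: translation-invariance from Lemma~\ref{lem:structure-weighted-players}(b) makes $g_{r,s}$ a function of $\varepsilon$ alone, telescoping gives Cauchy's functional equation on $[0,\infty)$, and continuity (which is in the hypothesis) upgrades it to $g_{r,s}(\varepsilon)=a_{r,s}\varepsilon$ uniformly across both cases; symmetry then follows from Lemma~\ref{lem:structure-weighted-players}(a) exactly as in the paper. This is more self-contained and avoids the up-front case distinction. For the ``moreover'' clause both arguments ultimately compare the exponential form forced by Lemma~\ref{lem:weighted-congestion-separable}(b) against the structure of the cross-difference: you exploit the already-established linearity (a term $2a_{r,s}\lambda$ cannot be absorbed into $\alpha\exp(\phi\lambda)+\beta$ with $\phi\neq 0$ unless $a_{r,s}=0$), whereas the paper argues that an exponential expression that is also translation-invariant in the sense of part (b) must vanish identically. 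The one step to make explicit in a full write-up is the observation you already note parenthetically: $\phi=0$ collapses case (b) into case (a), so the standing assumption that (a) fails indeed yields $\phi\neq 0$, which is what the incompatibility argument needs.
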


\begin{proof}
Let $\vec c \in \C$, $\vec c : \R_{\geq 0}^m \to \R^m$ with $m \in \N$ be arbitrary. 
  Let $r, s \in [m]$ with $r \neq s$ and let $\vec z \in \R_{\geq 0}^m$ with $z_r > 0$. 
  Using Lemma~\ref{lem:structure-weighted-players} we obtain
  \begin{align*}
    \underbrace{c_r(\vec z + \lambda \vec 1_{\{r, s\}}) + c_s(\vec z + \lambda \vec 1_{\{r, s\}})}_{h_0(\lambda)} & = c_r(\vec z + \lambda \vec 1_{r}) + c_r(\vec z + \lambda \vec 1_{\{r, s\}}) - c_r(\vec z + \lambda \vec 1_{r})\\[-15pt]
    & \qquad + c_s(\vec z + \lambda \vec 1_{f}) + c_f(\vec z + \lambda \vec 1_{\{r, s\}}) - c_s(\vec z + \lambda \vec 1_{s})\\
    & = \underbrace{c_r(\vec z + \lambda \vec 1_{r})}_{h_1(\lambda)} + \underbrace{c_s(\vec z + \lambda \vec 1_{s})}_{h_2(\lambda)} + 2\big(\underbrace{c_r(\vec z + \lambda \vec 1_{\{r, s\}}) - c_r(\vec z + \lambda \vec 1_{r})}_{h_4(\lambda)}\big)
  \end{align*}
  for all $\lambda \geq 0$.
  We apply Lemma~\ref{lem:weighted-congestion-separable} to the expressions $h_0(\lambda), h_1(\lambda), h_2(\lambda)$ and distinguish two cases.
  
  If we are in case (a) of Lemma~\ref{lem:weighted-congestion-separable}, then all three expressions are affine functions of $\lambda$ and we conclude that also $h_4$ must be affine in $\lambda$, i.e., there is $a, b \in R$ such that
  $c_r(\vec z + \lambda \vec 1_{\{r, s\}}) - c_r(\vec z + \lambda \vec 1_{r}) = a \lambda + b$.
  By part (a) of Lemma~\ref{lem:structure-weighted-players}, we observe that this equality also holds (for the same values of $a$ and $b$) when swapping the roles of $s$ and $r$.
  Applying part (b) of Lemma~\ref{lem:structure-weighted-players}, we observe that $b = 0$ and $a$ is independent of $z$, thus proving the statement of the lemma for this case.
  
   If we are in case (b) of Lemma~\ref{lem:weighted-congestion-separable}, then all three expressions are exponential functions of the form $a \exp(\phi \lambda) + b$ for some $\phi \in \R$ and we conclude that also $h_4$ must be of this form, i.e., there is $a', b' \in R$ such that $c_r(\vec z + \lambda \vec 1_{\{r, s\}}) - c_r(\vec z + \lambda \vec 1_{r}) = a' \exp(\phi \lambda) + b'$. Applying part (b) of Lemma~\ref{lem:structure-weighted-players}, we conclude that $a' = b' = 0$, thus proving the statement of the lemma for this case.
\end{proof}

We are now ready to establish the necessity of condition~1 or 2 of Theorem~\ref{thm:consistency_weighted}, concluding the proof of the theorem.

\begin{lemma}\label{lem:necessary_weighted}
Let $\C$ be a set of continuous functions that is closed under composition and consistent for weighted resource graph games. 
Then, one of the following statements is true:
\begin{enumerate}
  \item For each $\vec c \in \C$ with $\vec c : \R_{\geq 0}^m \to \R^m$ for some $m \in \N$ there is a symmetric matrix $\vec A \in \R^{m \times m}$ and a vector $\vec b \in \R^m$ such that 
  $\vec c(\vec x) = \vec A \vec x + \vec b$.
  \item There is $\phi \in \R$ such that for all $\vec c \in \C$ with $\vec c : \R_{\geq 0}^{m} \to \R^m$ there are $\vec a, \vec b \in \R^m$ such that $c_r(\vec x) = a_r \exp(\phi x_r) + b_r$ for all $r \in [m]$ and all $\vec x \in \R^m_{\geq 0}$.
\end{enumerate}
\end{lemma}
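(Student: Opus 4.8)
The plan is to combine the structural results already established---namely that cross-resource interactions are linear and symmetric (Lemma~\ref{lem:weighted-linear-effect}) and that the diagonal (separable) part is either globally affine or globally exponential (Lemma~\ref{lem:weighted-congestion-separable})---and show that these two pieces fit together consistently into exactly one of the two claimed forms. The key observation driving the case split is that Lemma~\ref{lem:weighted-congestion-separable} offers a genuine dichotomy for each $\vec c$, and that the ``exponential'' branch is incompatible with any nonzero cross-resource interaction.

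First I would treat the affine case, i.e.\ assume case (a) of Lemma~\ref{lem:weighted-congestion-separable} holds for $\vec c$. For each off-diagonal pair $r \neq s$, Lemma~\ref{lem:weighted-linear-effect} gives a symmetric coefficient $a_{r,s} = a_{s,r}$ with $c_r(\vec z + \lambda \vec 1_s) - c_r(\vec z) = a_{r,s}\lambda$ whenever $z_r > 0$. For the diagonal behavior, I would fix $S = \{r\}$ and $\vec z = \vec 0$ in case (a) of Lemma~\ref{lem:weighted-congestion-separable} to get that $\lambda \mapsto c_r(\lambda \vec 1_r)$ is affine, say equal to $a_{r,r}\lambda + b_r$. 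Setting $\vec A = (a_{r,s})_{r,s}$ (symmetric by the above) and $\vec b = (b_r)_r$, I would then verify $\vec c(\vec x) = \vec A \vec x + \vec b$ by decomposing an arbitrary load vector $\vec x$ into its contributions coordinate by coordinate: starting from $\vec 0$, raise each coordinate in turn and accumulate the linear increments supplied by Lemma~\ref{lem:weighted-linear-effect} for the cross terms and by the affine diagonal for the self term. A small technical point is that Lemma~\ref{lem:weighted-linear-effect} requires $z_r > 0$, so I would handle the value at points with $x_r = 0$ by a continuity argument (the cost functions are assumed continuous) or by approaching from $x_r > 0$; this yields statement~\ref{enum:weighted1}.

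Next I would treat the exponential case, i.e.\ assume case (b) of Lemma~\ref{lem:weighted-congestion-separable} holds with exponent $\phi$. Here the final sentence of Lemma~\ref{lem:weighted-linear-effect} is decisive: if case (a) fails, then $a_{r,s} = 0$ for all $r \neq s$, so there is no interaction among distinct resources and each $c_r$ depends only on $x_r$. Fixing $S = \{r\}$ and $\vec z = \vec 0$ in case (b) then gives $c_r(x_r \vec 1_r) = a_r \exp(\phi x_r) + b_r$ for constants $a_r, b_r$, and since $c_r$ is separable this determines $c_r(\vec x)$ on all of $\R^m_{\geq 0}$, yielding statement~\ref{enum:weighted2} with the \emph{common} exponent $\phi$ supplied uniformly by Lemma~\ref{lem:weighted-congestion-separable}.

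The main obstacle I anticipate is ensuring the two branches are genuinely exhaustive and mutually exclusive at the level of the whole set $\C$, not merely per function. Lemma~\ref{lem:weighted-congestion-separable} assigns to each $\vec c$ either (a) or (b), but a priori different members of $\C$ might fall into different branches, or one member might satisfy both (e.g.\ a constant or affine function can be read as a degenerate exponential with $\phi = 0$). I would resolve the exponent-consistency issue by noting that Lemma~\ref{lem:weighted-congestion-separable}(b) already fixes a single global $\phi$ shared across all $\vec c \in \C$, and I would resolve the branch-consistency issue by observing that if any $\vec c \in \C$ genuinely requires a nonlinear (strictly exponential) separable part, then case (a) cannot hold for the set, forcing statement~\ref{enum:weighted2} globally; otherwise every function is affine-compatible and statement~\ref{enum:weighted1} holds. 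Care is needed to argue that one cannot mix a strictly affine-with-interaction function and a strictly exponential function within the same composition-closed consistent $\C$, which follows because closure under composition would otherwise produce a single function violating the per-function dichotomy of Lemma~\ref{lem:weighted-congestion-separable}.
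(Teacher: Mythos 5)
Your proposal is correct and follows essentially the same route as the paper: a telescoping coordinate-by-coordinate decomposition using the linear cross-resource increments from Lemma~\ref{lem:weighted-linear-effect}, a case split on the dichotomy of Lemma~\ref{lem:weighted-congestion-separable} for the diagonal part, the vanishing of all $a_{r,s}$ in the exponential branch, and closure under composition to make the dichotomy uniform over $\C$. Your explicit handling of the $x_r = 0$ boundary via continuity and of the branch-mixing issue is slightly more careful than the paper's terse treatment, but it is the same argument.
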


\begin{proof}
Let $\vec c \in \C$, $\vec c : \R_{\geq 0}^m \to \R^m$ with $m \in \N$ 
be arbitrary. By Lemma~\ref{lem:weighted-linear-effect}, for all $r,s \in [m]$, there is $a_{r,s}$ such that $c_r(\vec z + \lambda \mathbf{1}_s) - c_r(\vec z) = a_{r,s} \lambda$ for all $\vec z \in \R_{\geq 0}^m$ with $z_r > 0$ and all $\lambda \geq 0$.
  Let $r_1, \dots, r_m$ be an arbitrary ordering of the resources in $R$ with $r_m = r$.
  Defining $\vec x^{(0)} = \vec x$ and $\vec x^{(i)} = \vec x^{(i - 1)} - x_{r_i} \cdot \vec 1_{r_i}$ for $i \in [m]$ we obtain
  \begin{align*}
    c_r(\vec x) & = c_r(x_r \cdot \vec 1_{r}) + \sum_{i = 1}^{m-1} c_r(\vec x^{(i - 1)}) - c_r(\vec x^{(i)})\\
    &= c_r(x_r \cdot \vec 1_{r}) + \sum_{s \in R \setminus \{r\}} a_{r,s} (x_s - 1).
  \end{align*}
  In case (a) of Lemma~\ref{lem:weighted-congestion-separable}, we conclude that $c_r(x_r \cdot \vec 1_{r})$ is an affine function of $x_r$. This implies that there is a symmetric matrix $\vec A \in \R^{m \times m}$ and a vector $\vec b \in \R^m$ such that $c_r(\vec x) = \vec A_{r,\cdot} \vec x + b_r$ for all $r \in R$ and all $\vec x \in \R^m_{\geq 0}$ with $x_r > 0$. Since $\C$ is closed under composition, all functions $\vec c \in \C$ have this property thus we retrieve case 1 of Lemma~\ref{lem:necessary_weighted}.
  
  In case (b) of Lemma~\ref{lem:weighted-congestion-separable}, we conclude that $c_r(x_r \cdot \vec 1_{r})$ is an exponential function of $x_r$. By Lemma~\ref{lem:weighted-linear-effect}, we than have that $a_{r,s} = 0$ and we thus obtain that $c_r(\vec x) = a_r \exp(\phi x_r) + b_r$ for all $r \in [m]$ for some constant $a_r,b_r, \phi \in \R$. As $\C$ is closed under composition, this implies that all functions $\vec c \in \C$ have this property, and we retrieve case 2 of Lemma~\ref{lem:necessary_weighted}.
\end{proof}

\section{Resource graph games on matroids}
While our previous characterizations hold for \emph{arbitrary strategy
spaces} of the players, we now turn to \emph{restricted strategy spaces}.
Specifically, we consider \emph{matroidal} strategy spaces, where the strategy set $X_i$ of player $i\in N$
corresponds to the set of incidence vectors of bases of a player-specific matroid $M_i=(R,\B_i), i\in N$ defined on the resource set $R$.

A \emph{resource graph game on matroids} is  then represented by the tuple 
$ G=(N,X,C),$
where $C= (\vec c_i)_{i\in N}$ denotes the vector of \emph{player-specific
non-separable cost functions}. 
The private cost of
player~$i$ under strategy profile $x \in X$ is defined as  $\pi_i(x) = \vec x_i^\top \vec c_i(\vec x).$
In order to specify assumptions on the functions $\vec c_i(\vec x), i\in N$,
we introduce the concept of player \emph{types}.
Let $\mathcal{B}$ denote the set of feasible
matroid base systems represented in binary vectors over $\{0,1\}^m$.
For example $U_k \in \mathcal{B}$, where $U_k$ is the base set of the uniform matroid of rank $ 1\leq k\leq m$.
For a matroid game $ G=(N,X,C)$, we say that  player $i\in N$ is of type $T \in \mathcal{B}$,
if the matroid base system for $i$ is given by $X_i=T$.
Now we
define a general notion of local monotonicity
of non-separable functions.
\begin{definition}[Local Monotonicity]\label{ass:local}
A function $\vec c: \Z_{\geq 0}^{m}\rightarrow \R^m$ 
is \emph{locally monotone}, if
for all $T\in \mathcal{B},r\in R$, there are non-decreasing functions
 $\nu_{T,r}:\Z_{\geq 0}\rightarrow\R, $ such that for all $\vec t \in T$, all $\vec z \in \Z^m_+,$  and all $r,s\in R$ with 
$\vec u=\vec t+\vec 1_s-\vec 1_r\in T$ and $\nu_{T,r}(t_r)\leq \nu_{T,f}(u_s)$
it holds that
\begin{align}\label{eq:cost}
 \vec t^\top \vec c(\vec t+\vec z)  &\leq \vec u^\top \vec c(\vec u+\vec z).
\end{align}
\end{definition}
The locality aspect of Definition~\ref{ass:local} arises, because
the condition relates  the cost of playing
strategies $\vec t$ and $\vec u$, which differ
only by exchanging the entries of two elements.
While this definition is quite abstract, we will provide an application and illustrating example in the realm of bilevel load balancing games, formalized in Theorem~\ref{thm:load-balancing} and its proof given below.
The main idea for our existence proof is to  use an associated matroid congestion game with \emph{separable} player-specific non-decreasing
cost functions $\nu(\vec x):=(\bm\nu_i(\vec x))_{i \in N}$ in order to construct a Nash equilibrium
for the original game with non-separable functions.
The proof of the theorem is given in Section~\ref{polymatroid:main-proof}.
\begin{theorem}\label{polymatroid:main}
Let $ G=(N,X,C)$ be a resource graph game on matroids with
$C= (\vec c_i)_{i\in N}$ and
locally monotone cost functions $\vec c_i,i\in N$.
Then, the following statements hold.
\begin{enumerate}
\item\label{enum:1} Any profile  $x \in X$ that is a Nash equilibrium for the matroid congestion game $ G=(N,X, \nu)$ is also a Nash equilibrium for the resource graph game $G=(N,X,C)$.
\item\label{enum:2} Nash equilibria for the resource graph game $G=(N,X,C)$ do exist.
\end{enumerate}
\end{theorem}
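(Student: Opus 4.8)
The theorem has two parts, and part~\ref{enum:2} follows immediately from part~\ref{enum:1}: matroid congestion games with separable non-decreasing player-specific cost functions always admit a pure Nash equilibrium (this is the Ackermann--R\"oglin--V\"ocking result for matroid congestion games), so the associated game $G=(N,X,\nu)$ has a Nash equilibrium $x$, which by part~\ref{enum:1} is also a Nash equilibrium for $G=(N,X,C)$. The real work is therefore entirely in part~\ref{enum:1}, and the strategy is to transfer an equilibrium certificate across the two games by exploiting the exchange structure of matroid bases together with local monotonicity.

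\emph{The core argument for part~\ref{enum:1}.} Fix a profile $x=(\vec x_i)_{i\in N}$ that is a Nash equilibrium for the separable matroid congestion game $G=(N,X,\nu)$, and suppose for contradiction that $x$ is not a Nash equilibrium for the resource graph game $G=(N,X,C)$. Then some player $i$ has a profitable deviation $\vec y_i \in X_i$ with $\pi_i(\vec y_i, x_{-i}) < \pi_i(x)$, i.e.\ $\vec y_i^\top \vec c_i(\vec x - \vec x_i + \vec y_i) < \vec x_i^\top \vec c_i(\vec x)$. Both $\vec x_i$ and $\vec y_i$ are bases of the matroid $M_i$, so I would invoke the matroid basis exchange property (in its strong/symmetric form) to relate them through single-element swaps. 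The plan is to argue that if $x$ is an equilibrium for the $\nu$-game then no single-element swap $\vec u = \vec x_i + \vec 1_s - \vec 1_r$ can strictly decrease $i$'s separable cost, which by the non-decreasing monotonicity of the $\nu_{T,r}$ translates into the inequality $\nu_{T,r}(\cdot) \leq \nu_{T,s}(\cdot)$ that is precisely the hypothesis of Definition~\ref{ass:local}. Local monotonicity \eqref{eq:cost} then yields $\vec t^\top \vec c_i(\vec t + \vec z) \leq \vec u^\top \vec c_i(\vec u + \vec z)$, i.e.\ the swap does not decrease the non-separable cost $\pi_i$ either. Chaining such swaps along an exchange sequence from $\vec x_i$ to the putative improving base $\vec y_i$ shows that $\vec y_i$ cannot in fact reduce $\pi_i$, contradicting the assumed profitable deviation.

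\emph{Making the swap-chaining precise.} The delicate point is how a single deviating step in the $\nu$-game interacts with the $\vec z$ appearing in \eqref{eq:cost}. I would set $\vec z$ to be the load of all players other than $i$, i.e.\ $\vec z = \vec x - \vec x_i$, so that $\vec t^\top \vec c_i(\vec t+\vec z) = \pi_i(\vec t, x_{-i})$ for any base $\vec t$ of $M_i$, and similarly for $\vec u$. This identifies the abstract inequality in Definition~\ref{ass:local} with an honest statement about $i$'s private cost under a single swap, holding $x_{-i}$ fixed. The equilibrium condition in the $\nu$-game supplies, for every swap $\vec u = \vec x_i + \vec 1_s - \vec 1_r$ that keeps $\vec u$ a base, the comparison of the marginal values $\nu_{T,r}$ and $\nu_{T,s}$ at the relevant loads, which is exactly the premise $\nu_{T,r}(t_r) \leq \nu_{T,s}(u_s)$ needed to apply \eqref{eq:cost}. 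One then sorts out the base-exchange combinatorics: a base $\vec y_i$ reachable from $\vec x_i$ can be obtained through a sequence of exchanges, each of which, by the above, is non-improving for $\pi_i$, so that $\pi_i(\vec x_i, x_{-i}) \leq \pi_i(\vec y_i, x_{-i})$ for every $\vec y_i \in X_i$.

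\emph{The main obstacle.} The hardest part will be the bookkeeping of the exchange sequence: a naive single swap changes the load vector, so the premise $\nu_{T,r}(t_r) \leq \nu_{T,s}(u_s)$ guaranteed by the $\nu$-equilibrium at the starting base $\vec x_i$ need not hold verbatim at an intermediate base reached midway through the chain, and one must be careful that local monotonicity is applied at a load configuration consistent with the $\nu$-game equilibrium condition. The clean route is to avoid chaining across changing loads altogether: rather than walk from $\vec x_i$ to $\vec y_i$, I would directly compare $\vec x_i$ with the single best-response swap and use that $x$ is a $\nu$-equilibrium to conclude that \emph{every} admissible swap is non-improving in $\pi_i$, then lift ``no improving swap'' to ``no improving base'' via the fact that in matroid congestion games local optimality (no improving swap) coincides with global optimality (no improving base)---this is where the matroid structure is genuinely used and where the argument must be watertight.
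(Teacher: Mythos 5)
Your overall architecture matches the paper's: part~2 follows from part~1 via the Ackermann--R\"oglin--V\"ocking existence result for player-specific matroid congestion games, and part~1 is proved by relating $\vec x_i$ to an arbitrary deviation $\vec y_i$ through matroid basis exchanges, feeding the $\nu$-equilibrium condition into Definition~\ref{ass:local} with $\vec z = \vec x - \vec x_i$. You also correctly identify the delicate point. The problem is your proposed resolution of it. You suggest bypassing the chain of exchanges by establishing only that no \emph{single} swap improves $\pi_i$, and then lifting this to ``no improving base'' via ``the fact that in matroid congestion games local optimality coincides with global optimality.'' That fact is a statement about \emph{separable} (modular, player-specific, non-decreasing) cost functions on matroid bases; it is exactly what makes the $\nu$-game well behaved. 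For the non-separable private cost $\vec t \mapsto \vec t^\top \vec c_i(\vec t + \vec z)$ of the resource graph game there is no such local-to-global principle: a base at which no single exchange is improving need not be a best response, because this objective is an arbitrary function on the base set subject only to local monotonicity. So the step you flag as the one that ``must be watertight'' is, as written, a genuine gap, and it is the crux of the theorem.

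The paper resolves the bookkeeping issue head-on rather than avoiding it. Writing $\vec y_i = \vec x_i + \sum_{j=1}^k (\vec 1_{s_j} - \vec 1_{r_j})$ with $r_1,\dots,r_k$ pairwise distinct elements of $\mathrm{supp}(\vec x_i)\setminus\mathrm{supp}(\vec y_i)$, $s_1,\dots,s_k$ pairwise distinct elements of $\mathrm{supp}(\vec y_i)\setminus\mathrm{supp}(\vec x_i)$, and every partial sum $\vec y_i^{\ell}$ a base, one observes that at the step $\ell \to \ell+1$ the resource $r_{\ell+1}$ has not yet been dropped and $s_{\ell+1}$ has not yet been added; hence the load of $r_{\ell+1}$ in the intermediate profile $y^{\ell}$ equals $x_{r_{\ell+1}}$ and the load of $s_{\ell+1}$ in $y^{\ell+1}$ equals $x_{s_{\ell+1}}+1$. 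The $\nu$-equilibrium condition at $x$ therefore supplies verbatim the premise $\nu_{i,r_{\ell+1}}(x_{r_{\ell+1}}) \leq \nu_{i,s_{\ell+1}}(x_{s_{\ell+1}}+1)$ needed to invoke \eqref{eq:cost} at \emph{every} intermediate step, not just the first, and the chain $\pi_i(x) \leq \pi_i(y^1) \leq \dots \leq \pi_i(y^k) = \pi_i(\vec y_i, x_{-i})$ goes through. In short: the worry you raise about changing loads is real but is dissolved by the disjointness structure of the exchange sequence; the workaround you propose instead does not hold for non-separable costs. To repair your proof, replace the local-to-global claim with this explicit induction along the exchange sequence.
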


As a consequence of Theorem~\ref{polymatroid:main}, we obtain the following application to the class of bilevel load balancing games on matroids. The proof of the theorem is given in Section~\ref{thm:load-balancing-proof}.
  \begin{theorem}\label{thm:load-balancing}
Bilevel load-balancing games on matroids possess pure Nash equilibria.
  \end{theorem}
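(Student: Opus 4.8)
The plan is to invoke Theorem~\ref{polymatroid:main}: it suffices to exhibit non-decreasing surrogate functions witnessing that the bilevel cost function of Example~\ref{ex:bilevel-cg}, $c_r(\vec x) = x_r + \kappa_r^*(\vec x)$, is locally monotone in the sense of Definition~\ref{ass:local}; the existence of a pure Nash equilibrium then follows immediately from statement~\ref{enum:2} of that theorem (so, conveniently, I need not re-prove existence for the associated separable game). I would take the simplest conceivable surrogate, namely the load itself, setting $\nu_{T,r}(\ell) = \ell$ for every type $T$ and every resource $r$. This is non-decreasing, and the associated matroid congestion game $G = (N,X,\nu)$ is just an ordinary load-balancing game, so the whole argument reduces to verifying the implication in Definition~\ref{ass:local} for this $\nu$.

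For that verification I would fix a player, let $\vec t$ be one of its bases, let $\vec z$ collect the loads of all remaining players, and write $\vec L = \vec t + \vec z$ for the total load. A swap replacing $r$ by $s$ produces the base $\vec u = \vec t + \vec 1_s - \vec 1_r$ with total load $\vec M = \vec L + \vec 1_s - \vec 1_r$; evaluating the surrogates at the respective total loads, the hypothesis of Definition~\ref{ass:local} reads $L_r \le L_s + 1 = M_s$, and the goal is $\vec t^\top \vec c(\vec L) \le \vec u^\top \vec c(\vec M)$. I would split the difference $\vec u^\top \vec c(\vec M) - \vec t^\top \vec c(\vec L)$ into a load part and an attack part. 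On the resources common to $\vec t$ and $\vec u$ the load contributions cancel, so the load part is exactly $M_s - L_r = L_s + 1 - L_r \ge 0$, which is precisely where the surrogate hypothesis enters. The attack part must then be controlled using the two structural facts $\sum_q \kappa_q^*(\vec x) = B$ for every configuration and $\kappa_q^*(\vec x) = B/k$ on each of the $k$ maximally loaded resources and $0$ elsewhere.

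The core of the argument, and the main obstacle, is that $\kappa^*$ is non-separable and jumps discontinuously exactly when the swap changes the set of maximally loaded resources. I would organize a case distinction on the position of $M_s = L_s + 1$ relative to the old maximum $L^\ast = \max_q L_q$. If $M_s > L^\ast$, then necessarily $L_s = L^\ast$ and $s$ becomes the unique maximum under $\vec M$; hence the player's attack burden after the swap equals $B$, whereas before it is $\sum_{q \in \mathrm{supp}(\vec t)} \kappa_q^*(\vec L) \le B$, so the attack part is non-negative and, together with the non-negative load part, settles this case. If $M_s \le L^\ast$, the maximum does not rise, and I would further distinguish whether $r$ leaves and whether $s$ enters the (same-level) maximum set, tracking how the per-resource share $B/k$ changes; the decisive elementary inequality is $(p+1)/(k+1) \ge p/k$ for $p \le k$, where $p$ counts the maximal resources already used by the player and $k$ the total number of maximal resources, which again forces the attack part to be non-negative. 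In every case the cost difference is non-negative, so $\vec c$ is locally monotone, and Theorem~\ref{polymatroid:main} yields the existence of pure Nash equilibria.
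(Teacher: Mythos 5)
Your proposal follows essentially the same route as the paper: the identity surrogate $\nu_{T,r}(\ell)=\ell$, the split of the cost difference into a load part (handled by the hypothesis $L_r\le L_s+1$) and an attack part (handled by a case analysis on how the swap changes the argmax set, with the key inequality $(p+1)/(k+1)\ge p/k$), followed by an appeal to part~2 of Theorem~\ref{polymatroid:main}. The case distinctions are organized slightly differently but cover the same situations, so this is the paper's argument in substance.
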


\subsection{Proof of Theorem~\ref{polymatroid:main}}
\label{polymatroid:main-proof}
We recap a trivial property of an equilibrium for the matroid game with separable cost functions.
\begin{lemma}
Let  $x \in X$ be an equilibrium for the matroid congestion game $ G=(N,X,\nu),$
with player-specific separable and non-decreasing cost functions $\nu(\vec x):=(\nu_i(\vec x), i\in N)$. Then, for any $y \in X$ with
$\vec y_i=\vec x_i+\vec 1_s-\vec 1_r$ for some $i\in N$ and $\vec y_j=\vec x_j$ for all $j \neq N \setminus \{i\}$, we have
$\nu_{i,r}(x_r)\leq \nu_{i,s}(y_s)$.
\end{lemma}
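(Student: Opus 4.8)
The plan is to unwind the definition of a pure Nash equilibrium for the matroid congestion game $G = (N, X, \nu)$ directly, exploiting the exchange property of matroid bases. The statement asserts that if $\vec x$ is an equilibrium and $\vec y_i = \vec x_i + \vec 1_s - \vec 1_r$ is another feasible strategy for player $i$ (obtained by a single swap, dropping $r$ and adding $s$), then $\nu_{i,r}(x_r) \leq \nu_{i,s}(y_s)$. The underlying intuition is that in an equilibrium, no player can strictly decrease their cost by deviating; in particular the swap deviation from $\vec x_i$ to $\vec y_i$ cannot help, and since the cost functions are separable and player-specific, the change in player $i$'s cost from this swap decomposes cleanly into the cost difference on resource $r$ (which is dropped) and resource $s$ (which is added).

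First I would write out the equilibrium inequality $\pi_i(\vec x) \leq \pi_i(\vec y_i, x_{-i})$ using the separable form $\pi_i(x) = \sum_{t \in R} x_{i,t}\, \nu_{i,t}(x_t)$. Because $\vec y_i$ differs from $\vec x_i$ only in coordinates $r$ and $s$, all terms for resources $t \notin \{r,s\}$ coincide on both sides and cancel. I must be careful to track the resource loads correctly: since only player $i$ changes strategy, the load on resource $r$ decreases by one (player $i$ no longer uses $r$, so $x_r$ becomes $x_r - 1 = y_r$) while the load on $s$ increases by one (so $x_s$ becomes $x_s + 1 = y_s$), and all other loads are unchanged. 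I would substitute these load values into the surviving terms. On the left, player $i$ contributes $\nu_{i,r}(x_r)$ from resource $r$ (which they use in $\vec x_i$). On the right, player $i$ contributes $\nu_{i,s}(y_s)$ from resource $s$ (which they now use in $\vec y_i$), evaluated at the new load $y_s = x_s + 1$.

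After cancellation, the equilibrium inequality reduces to exactly $\nu_{i,r}(x_r) \leq \nu_{i,s}(y_s)$, which is the claim. The only subtlety is making sure the loads are substituted at the right arguments: the term that is dropped is evaluated at the old load $x_r$, and the term that is added is evaluated at the new load $y_s$. I do not expect any genuine obstacle here, as this is a standard ``swap'' argument for matroid (or more generally singleton-per-exchange) congestion games; the non-decreasing and separability assumptions are used only implicitly in that the cost decomposes termwise. The main thing to get right is the bookkeeping of which load value accompanies which $\nu$, and to confirm that $\vec y_i \in X_i$ (guaranteed by the hypothesis $y \in X$, i.e.\ $\vec y_i$ is again a matroid basis) so that the deviation is admissible and the equilibrium inequality genuinely applies.
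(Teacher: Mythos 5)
Your proposal is correct and is essentially the same argument as the paper's: the paper proves the contrapositive (assuming $\nu_{i,r}(x_r) > \nu_{i,s}(y_s)$ and deriving $\pi_i(y) < \pi_i(x)$ from separability, contradicting the equilibrium property), while you run the identical termwise cancellation in the forward direction; both hinge on the observation that the swap changes player $i$'s cost by exactly $\nu_{i,s}(y_s) - \nu_{i,r}(x_r)$.
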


\begin{proof}
Assume by contradiction $\nu_{i,r}(x_r)> \nu_{i,s}(y_s)$. Then, by the monotonicity
and separability of $\bm\nu_i$, we have
$\pi_i(y)<\pi_i(x)$, contradiction.
\end{proof}
Now we prove Theorem~\ref{polymatroid:main}.
\begin{proof}
For~\ref{enum:1}.:\\
Let $x \in X$ be an equilibrium for the matroid congestion game $ G=(N,X,\nu)$.
Now we evaluate the cost of a player $i\in N$ when switching from $\vec x_i \in X_i$
to some $\vec y_i \in X_i$ for the game $G=(N,X,C)$:
\begin{align*}
    \pi_i(x)&=\vec x_i^\top \vec c_i(\vec x)
=\sum_{r\in \text{supp}(\vec x_i)} c_{i,r}(\vec x)
=\sum_{r\in \text{supp}(\vec x_i)\setminus \text{supp}(\vec y_i)} c_{i,r}(\vec x)
+\sum_{r\in \text{supp}(\vec x_i)\cap \text{supp}(\vec y_i)} c_{i,r}(\vec x).
\end{align*}
Because $X_i$ consists of the bases of a matroid, 
the profile $\vec y_i$ can be decomposed into
a sequence of single-element exchanges of the form
\[ \vec y_i =\vec x_i+\sum_{j=1}^k (\vec 1_{s_j}-\vec 1_{r_j})\]
with $r_1, \dots, r_k, s_1, \dots, s_k \in R$, $r_j\neq r_{j'}$ and $s_j\neq s_{j'}$ for all $ j \neq j'$, and such that $\vec y_i^{\ell}= \vec x_i+\sum_{j=1}^\ell (\vec 1_{s_j}-\vec 1_{r_j})\in X_i$ for all $1\leq \ell\leq k$.
We denote by $y^\ell:=(\vec y_i^{\ell},x_{-i})$ the corresponding profile, where
only player $i$ changed the strategy according to $\vec y_i^{\ell}$.
We prove by induction over $1\leq \ell\leq k-1$ 
that $\pi_i(y^{\ell})\leq\pi_i(y^{\ell+1})$.
For $\ell=1$, we get
\[ \pi_i(x)= \vec x_i^\top \vec c_i(\vec x)\leq {\vec y^1_i}^\top \vec c_i(\vec y^1)=\pi_i(y^1),
\]
where the  inequality follows by $\nu_{i,s_1}(y^{1}_{s_1})\geq \nu_{i,r_1}(x_{r_1})$ and~\eqref{eq:cost}. Here, we used the local monotonicity property of Definition~\ref{ass:local} by identifying $\vec x_{-i}= \vec u, \vec x_i=\vec t$ and $ \vec y^1_i=\vec y$.
Now we consider the inductive step $\ell\rightarrow \ell+1$:
\begin{align*}
\pi_i(y^{\ell})= {\vec y^{\ell}_i} ^\top \vec c_i(\vec y^{\ell}) \leq {\vec y^{\ell+1}_i} ^\top \vec c_i(\vec y^{\ell+1})=\pi_i(y^{\ell+1}),
\end{align*}
which again follows by $\nu_{i,s_{\ell+1}}( y^{\ell}_{s_{\ell+1}})\geq \nu_{i,r_{\ell+1}}(y_{r_{\ell+1}})=\nu_{i,r_{\ell+1}}(x_{r_{\ell+1}})$ and~\eqref{eq:cost}.

For~\ref{enum:2}.: We use ~\ref{enum:1}. together with the fact that pure Nash equilibria
do exist for the matroid congestion game $ G=(N,X,\nu)$
with non-decreasing separable player-specific cost functions, see Ackermann, R\"oglin and V\"ocking~\cite{Ackermann09}.
\end{proof}

\subsection{Proof of Theorem~\ref{thm:load-balancing}}\label{thm:load-balancing-proof}
In the following lemma, we verify that bilevel load balancing games as introduced in Example~\ref{ex:bilevel-cg} satisfy the conditions of Definition~\ref{ass:local}
and thus possesses Nash equilibria. Instead of only considering singleton strategies as in Example~\ref{ex:bilevel-cg}, we allow bases of matroids that may be player-specific.
\begin{lemma}
The cost function defined in~\eqref{eq:cost_attack} 
is locally monotone.
\end{lemma}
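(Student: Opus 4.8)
The goal is to verify that the bilevel load-balancing cost $c_r(\vec x) = x_r + \kappa_r^*(\vec x)$ from~\eqref{eq:cost_attack} satisfies Definition~\ref{ass:local}. The plan is to exhibit, for every type $T \in \mathcal{B}$ and every resource $r \in R$, a non-decreasing witness function $\nu_{T,r}$, and then to show that the defining inequality~\eqref{eq:cost} holds whenever $\nu_{T,r}(t_r) \leq \nu_{T,s}(u_s)$. The natural choice is to let the witness ignore the attacker entirely and track only the congestion term, i.e.\ set $\nu_{T,r}(t_r) = t_r$ for all $T$ and $r$; this is trivially non-decreasing. With this choice the hypothesis $\nu_{T,r}(t_r) \leq \nu_{T,s}(u_s)$ becomes the purely combinatorial condition $t_r \leq u_s$, and since $\vec u = \vec t + \vec 1_s - \vec 1_r$ this says $t_r \leq t_s + 1$, i.e.\ $t_r \leq t_s$ on the integer lattice (as $t_r > 0$ because $\vec t$ uses $r$). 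So I would reduce the whole statement to: \emph{if $\vec u = \vec t + \vec 1_s - \vec 1_r$ and $t_r \leq t_s$, then $\vec t^\top \vec c(\vec t + \vec z) \leq \vec u^\top \vec c(\vec u + \vec z)$.}

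First I would split the cost into its two additive parts, writing $\vec t^\top \vec c(\vec t + \vec z) = \vec t^\top(\vec t + \vec z) + \vec t^\top \bm\kappa^*(\vec t + \vec z)$ and analogously for $\vec u$, so that the congestion contribution and the attack contribution can be compared separately. For the congestion part, a direct computation using $\vec u = \vec t + \vec 1_s - \vec 1_r$ shows that $\vec u^\top(\vec u + \vec z) - \vec t^\top(\vec t + \vec z)$ reduces, after cancellation, to an expression that is nonnegative precisely when $t_s \geq t_r$ (the quadratic self-term plus the $\vec z$ cross-term both move in the right direction when load shifts from the lighter-loaded $r$ to $s$). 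This is the one genuinely computational step; I expect it to come out to something like $(t_s + z_s) - (t_r + z_r) + 1 \geq 0$ or a close variant, and since we are in the regime $t_r \leq t_s$ it will be nonnegative, though I would be careful about the $\vec z$ terms and verify that $z_r, z_s$ do not spoil the inequality — here one should note $\vec z$ is added uniformly to both sides, so its cross-contribution is controlled.

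The more delicate part is the attacker term $\bm\kappa^*$, because it is discontinuous and depends on the $\arg\max$ set of the whole load vector, which changes when we move from $\vec t + \vec z$ to $\vec u + \vec z$. The key structural observation I would use is that the total budget $B$ is fixed and is spread evenly over the maximizers, so $\vec 1^\top \bm\kappa^*(\cdot) = B$ always, but what matters for a \emph{single} player is the portion of $B$ that falls on the resources in the player's own support. I would argue that shifting one unit of load from a lower-loaded resource $r$ to a higher-or-equally-loaded resource $s$ can only weakly increase the maximum load and hence weakly shrink (or preserve) the set of attacked resources that lie in the player's support, keeping the attack cost the player pays under control; combined with the congestion increase this yields~\eqref{eq:cost}. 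The main obstacle will be handling the boundary cases where $r$ and/or $s$ enter or leave the $\arg\max$ set as a result of the exchange, so I would organize the argument by cases according to whether $t_r + z_r$ and $t_s + z_s$ are below, at, or equal to the maximum load, checking that in each case the monotone witness $\nu_{T,r}(t_r) = t_r$ correctly certifies the inequality. Once the lemma is established, Theorem~\ref{thm:load-balancing} follows immediately from Theorem~\ref{polymatroid:main}.
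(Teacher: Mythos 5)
Your choice of witness $\nu_{T,r}(x)=x$ and your overall plan (split the cost into the congestion term and the attack term, then do a case analysis on the $\arg\max$ set) match the paper's proof. However, there is a genuine gap in how you instantiate the hypothesis of Definition~\ref{ass:local}, and it breaks both halves of the argument. You apply the witness only to the player's own entries, turning the hypothesis into $t_r \leq u_s$; since $\vec t$ and $\vec u$ are $0/1$ basis vectors with $t_r = u_s = 1$, that condition holds for \emph{every} admissible exchange, so you would be claiming $\vec t^\top\vec c(\vec t+\vec z)\leq \vec u^\top\vec c(\vec u+\vec z)$ for every $\vec z$ --- which is false (take $z_r$ much larger than $z_s$: the left side contains the huge term $t_r+z_r$, which the right side replaces by $u_s+z_s$). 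Your further reduction of $t_r\leq u_s=t_s+1$ to ``$t_r\leq t_s$'' is also an algebra slip, and for $0/1$ bases that condition in fact never holds. The paper's proof instead evaluates the witness at the \emph{total} loads, working under the hypothesis $\nu_{T,r}(t_r+z_r)=t_r+z_r\leq u_s+z_s=\nu_{T,s}(u_s+z_s)$; this is also how local monotonicity is invoked in the proof of Theorem~\ref{polymatroid:main}, where $\nu$ is applied to the aggregate loads of the profile. Under that hypothesis the congestion difference, which equals exactly $(u_s+z_s)-(t_r+z_r)$, is nonnegative. Your remark that ``$\vec z$ is added uniformly to both sides, so its cross-contribution is controlled'' is not correct: the difference is $z_s-z_r$ plus the unit shift and has no definite sign without the total-load hypothesis.

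A second, smaller issue: your intuition for the attack term (``the set of attacked resources in the player's support weakly shrinks, keeping the attack cost under control'') points the wrong way. In the case where $s$ newly enters the $\arg\max$ set while $r$ was not in it, the intersection of the attacked set with the player's support \emph{grows} by one, and the inequality actually needed is $B\tfrac{k}{n}\leq B\tfrac{k+1}{n+1}$, i.e., the deviation must not \emph{decrease} the attack burden. The case analysis you propose can recover all of this, but only after the hypothesis is corrected to the total-load form.
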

\begin{proof}
We need to show that there are functions $\nu(\vec x):=(\bm\nu_T(\vec x))_{T \in \B}$ such that
\eqref{eq:cost} holds under the  assumptions stated in Definition~\ref{ass:local}.
Let us rewrite~\eqref{eq:cost} in the current setting:
\begin{equation}\label{eq:cost-attack}
\begin{aligned}
 \vec t^\top \vec c(\vec t+\vec z)&=\sum_{g\in  \text{supp}(\vec t)}
 t_g+z_g+\kappa_g^*(\vec t+\vec z)\\
 & \leq 
  \sum_{g\in  \text{supp}(\vec u)}
 u_g+z_g+\kappa_g^*(\vec u+\vec z)= \vec u^\top \vec c(\vec u+\vec z).
 \end{aligned}
\end{equation}

We claim that by setting $\nu_{T,g}(x)=\nu_{T,g}(x)=x$ for all $x \in \Z_+, T\in \B,g\in R$ inequality
\eqref{eq:cost-attack} holds. Let $\vec t \in T$, $\vec z \in \Z^m_+,$  and consider $r,s\in R$ with 
$\vec u=\vec t+\vec 1_s-\vec 1_r\in T$ and  $\nu_{T,r}(t_r+z_r)=t_r+z_r\leq \nu_{T,f}(u_s+z_s)=u_s+z_s$.
 By $t_r+z_r\leq u_s+z_s$ and $t_g+z_g=u_g+z_g$ for all $g\notin \{r,s\}$, we get
 \[ \sum_{g\in  \text{supp}(\vec t)}
 t_g+z_g \leq \sum_{g\in  \text{supp}(\vec u)} u_g+z_g.\]
 Thus, it suffices to show that
  \begin{equation}\label{eq:kappa} \sum_{g\in  \text{supp}(\vec t)}
\kappa_g^*(\vec t+\vec z) \leq \sum_{g\in  \text{supp}(\vec u)} \kappa_g^*(\vec u+\vec z).\end{equation}

For $\vec w\in \Z_+^R$ define $S(\vec w):=\arg\max\{w_g \vert g\in R\}$.
We distinguish  two cases.
 \begin{itemize}
  \item Case 1: $s\notin S(\vec u+\vec z)$.

  With $u_s+z_s\geq t_r+z_r$, we get  $r\notin S(\vec t+\vec z)$ and thus
$\sum_{g\in  \text{supp}(\vec t)} \kappa_g^*(\vec t+\vec z)=\sum_{g\in  \text{supp}(\vec u)} \kappa_g^*(\vec u+\vec z)$, hence~\eqref{eq:kappa} follows.\\
 
 \item Case 2: $s\in S(\vec u+\vec z)$.
  \end{itemize}
 We consider the two sub-cases of whether or not
 $s$ was already an argmax element under $\vec t+\vec z$ or not.
\begin{itemize}
\item[-]  Case 2(a): $s\in S(\vec t+\vec z)$.
 
 This case implies $S(\vec u+\vec z)=\{s\}$ and thus we get 
\[ \sum_{g\in  \text{supp}(\vec u)} \kappa_g^*(\vec u+\vec z)= \kappa_s^*(\vec u+\vec z)=B\geq 
\sum_{g\in  \text{supp}(\vec t)} \kappa_g^*(\vec t+\vec z).\]
\item[-]  Case 2(b): $s\notin S(\vec t+\vec z)$. 

This case implies
 $S(\vec u+\vec z)=S(\vec t+\vec z)\cup\{s\}$.
Let us consider  two further sub-cases:
 $|S(\vec t+\vec z)\cap \text{supp}(\vec t)|=0$ or
  $|S(\vec t+\vec z)\cap \text{supp}(\vec t)|\geq 1$.
 
 For $|S(\vec t+\vec z)\cap \text{supp}(\vec t)|=0$, we trivially get 
\[ \sum_{g\in  \text{supp}(\vec u)} \kappa_g^*(\vec u+\vec z) \geq 0=\sum_{g\in  \text{supp}(\vec t)} \kappa_g^*(\vec t+\vec z).\]
For $|S(\vec t+\vec z)\cap \text{supp}(\vec t)|\geq 1$,
we get in the case $r\notin S(\vec t+\vec z)$
  \begin{align*} \sum_{g\in  \text{supp}(\vec u)} \kappa_g^*(\vec u+\vec z)&=
  \frac{B \cdot |S(\vec u+\vec z)\cap \text{supp}(\vec u)|}{|S(\vec u+\vec z)|}
  =  \frac{B \cdot (|S(\vec t+\vec z)\cap \text{supp}(\vec t)|+1)}{|S(\vec t+\vec z)|+1}
 \\ &\geq \frac{B \cdot (|S(\vec t+\vec z)\cap \text{supp}(\vec t)|)}{|S(\vec t+\vec z)|}
 =\sum_{g\in  \text{supp}(\vec t)} \kappa_g^*(\vec t+\vec z). 
  \end{align*}
  In case $r\in S(\vec t+\vec z)$, we have $|S(\vec t+\vec z)|=|S(\vec u+\vec z)|$ and
  $|S(\vec t+\vec z)\cap \text{supp}(\vec t)|=|S(\vec u+\vec z)\cap \text{supp}(\vec u)|$ and thus
\begin{align*} \sum_{g\in  \text{supp}(\vec u)} \kappa_g^*(\vec u+\vec z)&=
  \frac{B \cdot |S(\vec u+\vec z)\cap \text{supp}(\vec u)|}{|S(\vec u+\vec z)|}
  =  \frac{B \cdot (|S(\vec t+\vec z)\cap \text{supp}(\vec t)|)}{|S(\vec t+\vec z)|}
 \\ &
 =\sum_{g\in  \text{supp}(\vec t)} \kappa_g^*(\vec t+\vec z). \qedhere
  \end{align*}
 \end{itemize}
\end{proof}
This lemma together with the existence result in part~\ref{enum:2} of Theorem~\ref{polymatroid:main}
implies Theorem~\ref{thm:load-balancing}.

\section{Complexity of verifying equilibria}
\label{app:complexity}

\begin{theorem}
  It is $\mathsf{NP}$-hard to determine whether a given strategy profile of a resource graph game is a pure Nash equilibrium. This holds even when restricted to games where there is only a single player, the cost function $c$ fulfills the requirements of Theorem~\ref{thm:consistency_unweighted} and one of the following conditions is fulfilled:
  \begin{enumerate}
      \item $\mathcal{S}_1$ is a partition matroid on $R$.
      \item $\mathcal{S}_1$ is the set of $s$-$t$-paths in a directed graph and $|B_r| = 1$ for all $r \in R$.
  \end{enumerate}
  
\end{theorem}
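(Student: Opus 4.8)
The plan is to exploit that the game has a single player: a profile $x=(\vec x_1)$ is a pure Nash equilibrium if and only if $\vec x_1$ minimizes the player's cost over $X_1=\mathcal{S}_1$, so $x$ fails to be an equilibrium exactly when a strictly cheaper strategy (a profitable deviation) exists. It therefore suffices to reduce an $\mathsf{NP}$-complete problem to the task of detecting a profitable deviation; a polynomial-time test deciding whether $x$ is an equilibrium would, by negation, decide that problem, establishing the claimed hardness unless $\mathsf{P}=\mathsf{NP}$. Since the player is alone, the load equals $\vec 1_S$ for $S=\mathrm{supp}(\vec x_1)$, and a cost of the form \eqref{eq:cost-function-unweighted} evaluates to the quadratic set function
\begin{align*}
\pi_1(\vec 1_S)=\sum_{r\in S} f_r(1)+\vec 1_S^\top \vec A\,\vec 1_S,
\end{align*}
with $\vec A$ symmetric. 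Both reductions follow the same template: I introduce a \emph{safe} strategy whose cost I place inside a gap created by a huge penalty $M$, so that every non-safe strategy is either ``good'' (cheaper than the safe one, encoding a solution of the source instance) or ``bad'' (cost above $M$, hence never profitable).

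\textbf{Case 1 (partition matroid).} I would reduce from $\textsc{3-Coloring}$. For each vertex $v$ of the input graph form a part $G_v=\{v^1,v^2,v^3,s_v\}$ consisting of three color elements and one safe element $s_v$; the direct sum of these rank-$1$ uniform matroids is a partition matroid, and a base picks one element per vertex. I put a reward $f_r(1)=-1$ on each color element and $0$ on each $s_v$, and set symmetric penalties equal to $M$ (with $M>|V|$) for (i) equally colored endpoints of an edge, i.e.\ $A_{v^j,w^j}=M$ when $\{v,w\}$ is an edge, and (ii) every pair $(s_v,w^j)$ of a safe element and a color element of a \emph{different} vertex. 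The cross-penalties in (ii) force an all-or-nothing choice: any base mixing safe and color elements contains a penalized pair and costs at least $2M-|V|$. Hence the only cheap bases are the all-safe base (cost $0$) and conflict-free all-color bases, i.e.\ proper $3$-colorings (cost $-|V|$). Taking $x$ to be the all-safe base, a profitable deviation exists iff a proper $3$-coloring exists; otherwise the all-safe base is optimal and $x$ is an equilibrium.

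\textbf{Case 2 ($s$--$t$ paths, $|B_r|=1$).} Because $c_r$ may depend on a single load while $\vec A$ is symmetric, the off-diagonal part of $\vec A$ must form a matching on the resources (each resource interacts with at most one other), the remaining resources having separable costs. I would therefore reduce from \textsc{Path Avoiding Forbidden Pairs} (PAFP), first transforming an arbitrary instance into one whose forbidden pairs form a matching: an edge lying in $d$ pairs is replaced by a series of $d$ copies, one per pair, so that traversing the edge still forces traversing all copies while each copy lies in a single pair. On the resulting directed graph I take the edges as resources with unit base cost, set $A_{e,e'}=M$ for each forbidden pair $(e,e')$, and add one extra direct $s$--$t$ edge $e_0$ of separable cost $c_0$ chosen in the gap $L<c_0<2M$, where $L$ bounds the length of any simple path and $M>L$. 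Taking $x$ to be the one-edge path $\{e_0\}$, a forbidden-pair-avoiding $s$--$t$ path costs at most $L<c_0$ and is a profitable deviation, whereas any path using a forbidden pair costs at least $2M>c_0$; thus a profitable deviation exists iff a valid path exists, and $x$ is an equilibrium iff the PAFP instance is a no-instance.

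The technical heart of both arguments is engineering the cost gap so that ``a profitable deviation exists'' coincides \emph{exactly} with the yes-instances of the source problem while respecting the structural restrictions and keeping the cost in the form \eqref{eq:cost-function-unweighted}. In Case~1 the subtlety is that a naive conflict gadget is always improvable by a single beneficial swap; the safe--color cross-penalties are precisely what rule out such partial deviations and pin the equilibrium value in no-instances. In Case~2 the subtlety is realizing forbidden pairs under the constraint $|B_r|=1$, which is exactly why the matching reformulation of PAFP is needed, together with the observation that the edge-splitting above preserves $\mathsf{NP}$-hardness. I expect verifying the gap inequalities and the correctness of the matching transformation to be the main, though routine, obstacles; the conceptual crux is the all-or-nothing gadget of Case~1.
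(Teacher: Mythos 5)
Your overall strategy (a single player, so ``not a Nash equilibrium'' means ``a strictly cheaper strategy exists'', and one engineers a reference strategy sitting inside a cost gap) matches the paper's. Your Case~1 is correct and takes a genuinely different route: you reduce from \textsc{3-Coloring} with an explicit per-vertex safe element and safe--color cross-penalties forcing an all-or-nothing choice, whereas the paper reduces from \textsc{3-SAT} with one resource per literal occurrence and contradiction penalties, and does not spell out which reference profile is handed to the verifier. Your version makes the reference profile (the all-safe base, of cost $0$) and the gap structure ($-|V|$ for proper colorings versus $0$ versus at least $2M-|V|$ for everything else) completely explicit, which is a genuine advantage in rigor; the cost function is of the form \eqref{eq:cost-function-unweighted} and the strategy set is a partition matroid, as required.

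Case~2, however, has a genuine flaw as written: your construction violates the condition $|B_r|=1$. You correctly observe that $|B_r|=1$ together with symmetry of $\vec A$ forces the interaction structure to be a matching and that a paired resource can then depend only on its partner's load --- but you then give every edge a ``unit base cost'', so for an edge $e$ in a forbidden pair $\{e,e'\}$ the cost $c_e(\vec x)=f_e(x_e)+Mx_{e'}$ depends on both $x_e$ and $x_{e'}$, i.e., $B_e=\{e,e'\}$ has size two. (By the paper's definition, $B_r$ must contain $r$ whenever $c_r$ depends on $x_r$.) The fix is small and lands essentially on the paper's construction: set $f_e\equiv 0$ on all original edges so that $c_e(\vec x)=x_{e'}$ for paired edges and $c_e\equiv 0$ otherwise --- then a forbidden-pair-free path costs $0$ and any other path costs at least $2$ --- and give only the safe edge $e_0$ a separable cost strictly between $0$ and $2$. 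Your serial edge-splitting that turns an arbitrary forbidden-pairs instance into one whose pairs form a matching is correct, and is in fact needed even for the paper's own cost definition to be well-posed; keep that part.
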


\begin{proof}
  In case 1: We reduce from \textsc{3-SAT}. Given a \textsc{3-SAT} instance on $m$ clauses, let $z_{ij}$ denote the $j$th literal of clause $i$. We let $R = \{z_{ij} : i \in [m], j \in [3]\}$ be the set of resources of the game. The strategy space of the single player 1 is given by the bases of a partition matroid such that $S \subseteq R$ is a basis of the matroid if and only if $|S \cap \{z_{i1}, z_{i2}, z_{i3}\}| = 1$ for all $i \in [m]$. We further define the cost function $c_r(\vec x) = \sum_{s \in \Delta(r)} x_{s}$ where $\Delta(r)$ is the set of literals that contradict the literal $r$ (i.e., $s \in \Delta(r)$ if and only if $s$ and $r$ are literals of the same variable but with different signs).
  It is easy to see that player 1 has a strategy of cost $0$ if and only if there is a truth assignment fulfilling all clauses of the \textsc{3-SAT} instance.

  In case 2: We reduce from \textsc{Forbidden Pairs $s$-$t$-path}, which is known to be NP-hard~\cite{gabow1976two}: Given a digraph $G = (V, E)$, two nodes $s, t \in V$, and a collection of edge pairs $\{e_1, e'_1\}, \dots, \{e_k, e'_k\}$, does there exist an $s$-$t$-path $P$ such that $|P \cap \{e_i, e'_i\}| \leq 1$ for all $i \in [k]$? 
  
  We construct the resource graph game as follows: Let $R = E$ be the resource set. For $e \in E$ and $\vec x \in \mathbb{R}^{E}$ define $c_{e}(\vec x) = x_{e'_i}$ if $e = e_i$ for some $i \in [k]$, $c_{e}(x) = x_{e_i}$ if $e = e'_i$ for some $i \in [k]$ and $c_e(x) = 0$ otherwise. The game has a single player whose strategy space corresponds to the set of $s$-$t$-paths in $G$.
  It is easy to see that player 1 has a strategy of cost $0$ if and only if path avoiding all forbidden pairs.
\end{proof}

\bibliographystyle{plain}
\bibliography{master-bib}

\end{document}